\def\hcboxcm#1#2{\hbox to #1{\hfill #2 \hfill}}
\newcommand{\n}{\noindent}
\def\null{\hbox{}}
\def\tn1{\widetilde n_1}
\def\tn2{\widetilde n_2}
\def\tn{\widetilde n }
\let\ds\displaystyle
\def\be{\begin{equation}}
\def\ee{\end{equation}}
\def\bea{\begin{eqnarray}}
\def\eea{\end{eqnarray}}
\def\bean{\begin{eqnarray*}}
\def\eean{\end{eqnarray*}}
\def\={\, = \, }
\def\NN{{\mathbb N} }
 \def\OO{\rm \hbox{O\kern-.34em\raise.47ex
         \hbox{$\scriptscriptstyle |$}\kern-.46em\raise.47ex
         \hbox{$\scriptscriptstyle |$}\kern+0.5 em }}
\def\RR{{\mathbb R} }
\def\ZZ{{\mathbb Z} }
\def\NN{\mathbb{N}}
\def\OO{\mathbb{O}}
\def\RR{\mathbb{R}}
\def\ZZ{\mathbb{Z}}
\def\Box{\leavevmode\vbox{\hrule
     \hbox{\vrule\kern4pt\vbox{\kern4pt}%
           \vrule}\hrule}}
\def\blackbox{\leavevmode\vrule height 5pt width 4pt depth 0pt\relax}
\def\eqalign#1{\null\,\vcenter{\openup1\jot \m@th
   \ialign{\strut \hfil$\displaystyle{##}$ & $\displaystyle{{}##}$\hfil
      \crcr#1\crcr}}\,}
\def\eqalignrll#1{\null\,\vcenter{\openup1\jot \m@th
   \ialign{\strut \hfil$\displaystyle{##}$ & $\displaystyle{{}##}$\hfil
    & $\displaystyle{{}##}$\hfil
      \crcr#1\crcr}}\,}
\def\eqalignrcl#1{\null\,\vcenter{\openup1\jot \m@th
   \ialign{\strut \hfil$\displaystyle{##}$ &\hfil $\displaystyle{{}##}$\hfil
    & $\displaystyle{{}##}$\hfil
      \crcr#1\crcr}}\,}
\def\eqalignno#1{\displ@y \tabskip\@centering
  \halign to\displaywidth{\hfil$\@lign\displaystyle{##}$\tabskip\z@skip
    &$\@lign\displaystyle{{}##}$\hfil\tabskip\@centering
    &\llap{$\@lign##$}\tabskip\z@skip\crcr
    #1\crcr}}
\newcounter{appendix}
\newcounter{sectionz}
\def\appendix{\advance\c@appendix by 1
\def\thesectionz {\Alph{appendix}}
\def\thesection{\Alph{appendix}} 
   \ifnum\c@appendix=1 \setcounter{section}{-1} \fi
   \@startsection {section}{1}{\z@}{-3.5ex plus -1ex minus 
  -.2ex}{2.3ex plus .2ex}{\large\bf}}
\newtheorem{lemme}{Lemma}[section]  
\newtheorem{theorem}[lemme]{Theorem}
\newtheorem{corollary}[lemme]{Corollary}
\newtheorem{definition}[lemme]{Definition}
\newtheorem{proposition}[lemme]{Proposition}
\newtheorem{remark}[lemme]{Remark} 
\def\deblem{\begin{lemme}\it }
\def\finlem{\end{lemme}}
\def\debthm{\begin{theorem}\it }
\def\finthm{\end{theorem}}
\def\debprop{\begin{proposition} \it}
\def\finprop{\end{proposition}}
\def\debcor{\begin{corollary}\it }
\def\fincor{\end{corollary}}
\def\debdef{\begin{definition}\it}
\def\findef{\end{definition}}
\def\debrem{\begin{remark}\em}
\def\finrem{\null\hfill\blackbox\end{remark}}
\title[The quantum beating]{The quantum beating and its numerical simulation}
\author[R. Carlone]{Raffaele Carlone}
\address{Universit\`{a} ``Federico II'' di Napoli, Dipartimento di Matematica e Applicazioni ``R. Caccioppoli'', MSA, via Cinthia, I-80126, Napoli, Italy.}
\email{raffaele.carlone@unina.it}
\author[R. Figari]{Rodolfo Figari}
\address{Universit\`{a} ``Federico II'' di Napoli, Dipartimento di Fisica e INFN Sezione di Napoli, MSA, I-80126, Napoli, via Cinthia, Italy.}
\email{rodolfo.figari@na.infn.it}
\author[C. Negulescu]{ Claudia Negulescu}
\address{Universit\'e de Toulouse \& CNRS, UPS, Institut de Math\'ematiques de Toulouse UMR 5219, F-31062 Toulouse, France}
\email{claudia.negulescu@math.univ-toulouse.fr}
\begin{document}
\maketitle

\begin{abstract}
We examine the suppression of quantum beating in a one dimensional non-linear double well  potential, made up of two focusing nonlinear point interactions.  The investigation of the Schr\"odinger dynamics is reduced to the study of a system of coupled nonlinear Volterra integral equations. For various values of the geometric and dynamical parameters of the model we give analytical and numerical results on the way states, which are initially confined in one well, evolve. We show that already for a nonlinearity exponent well below the critical value there is  complete suppression of the typical beating behavior characterizing the linear quantum case. 	
\end{abstract}

\bigskip

\keywords{{\bf Keywords:} non-linear Schr\"odinger equation, weakly singular Volterra integral equations, numerical computation of highly oscillatory integrals, quantum beating effect.}

%%%%%%%%%%%%%%%%%%%%%%%%%%%%%%%%%%%%%%%%%%%%%%

%%%%%%%%%%%%%%%%%%%%%%%%%%%%%%%%%%%%%%%%%%%%%%%

\section[Introduction]{Introduction}
 Quantum beating may nowadays refer to many, often quite different, phenomena studied in various domains of quantum physics, ranging from quantum electrodynamics to particle physics, from solid state physics to molecular structure and dynamics. \\
A paradigmatic example in the latter field is the inversion in the ammonia molecule observed experimentally in 1935. The ammonia molecule is pyramidally shaped. Three hydrogen atoms form the base and the nitrogen atom is located in one of the two distinguishable states (enantiomers) on one side or the other with respect to the base (chirality). Experimentally it was tested that microwave radiation could  induce a periodic transition from one state to the other (quantum beating). It was also observed that in several circumstances the pyramidal inversion was suppressed.  In particular in an ammonia gas the transition frequency was recognized to decrease for increasing pressure and the beating process proved to be finally suppressed for pressures above a critical value.  \\
A theoretical explanation of the quantum beating phenomenon was obtained modeling the nitrogen atom (better, the two ``non-bonding`` electrons of nitrogen) as a quantum particle in a double well potential.\\ 
The double well potential is of ubiquitous use in theoretical physics. In our present context, its importance consists in the fact that, for particular values of the parameters, the ground state and the first excited state have very close energies, forming an almost single, degenerate, energy level. A superposition of these two states is shown to evolve concentrating  periodically inside one well or the other, with a frequency proportional to the energy difference (see section \ref{SEC21} below).\\

According to the mathematical quantum theory of molecular structure developed in the second half of the last century (see \cite{Davies:1979ep,Davies:1995gb} and references therein; see also \cite{JonaLasinio:2001gr,Herbauts:2007ix} for studies of the pressure dependent transition mechanism) the effect of the ammonia molecule quantum environment can be modeled as a non-linear perturbation term added to the double well potential. A detailed quantitative analysis of the physical mechanism giving rise to the non-linear reaction of the environment, in the case of pyramidal molecules, can be found in \cite{Claverie:1986ee}.\\
Following this suggestion, Grecchi, Martinez and Sacchetti (\cite{Grecchi:1996de,Grecchi:2002bh,Sacchetti:2004fp,Sacchetti:2006io}) investigated the semi-classical limit of solutions to the non-linear Schr\"odinger initial value problem obtained perturbing the double well potential with a non-linear term breaking the rotational symmetry. They were able to prove that there is a critical value of the  coupling constant, measuring the strength of the symmetry breaking non-linear term, above which the beating period goes to infinity meaning that the beating phenomenon is suppressed. \\
In this paper we present a different model of a similar physical situation. We consider a hamiltonian with two concentrated non-linear attractive point potentials and we investigate the corresponding Cauchy problem. The study of the evolution problem is  reduced to the analysis 
of a system of two Volterra integral equations whose solutions we examine via numerical computation. It is worth stressing that the non-linear model we consider is governed by symmetric dynamical equations. Asymmetry will appear only as a consequence of the non-linearity and of specific initial conditions. We will come back to this point in the conclusions.\\
In Sections \ref{SEC21} and \ref{SEC22} we recall the properties of the corresponding symmetric and asymmetric linear case in order to clarify the origin of the beating phenomenon and its destruction. Due to the great degree of solvability of point interaction hamiltonians, the characterization of the beating states as functions of the dynamical and geometrical parameters of the model, will be carried through at a high level of detail.\\
In Section \ref{SEC3} we investigate, via numerical studies, the evolution problem in the linear symmetric, linear asymmetric and non-linear  cases. We show that the asymmetry resulting from the non-linearity causes beating suppression and a  rapid localization in one of the wells as soon as the non-linearity becomes relevant.\\
In the conclusions we compare our results with what is known in literature and we list some open problems and some possible extensions of our results.

\section{The mathematical model - Concentrated nonlinearities} \label{SEC2} 
%%%%%%%%%%%%%%%%%%%%%
In order to introduce the double well potential model  we shall investigate in the present paper, we first briefly recall the definition of point interaction hamiltonians in $L^{2}(\mathbb{R})$ (see (\cite{SAlbeverio:2011vta}) for further details).\\
For two point scatterers placed in $Y=\{y_{1},y_{2}\}$ of strength  $\underline{\gamma} = \{\gamma_{1}, \gamma_{2}\} ,\,\,\, \gamma_{i} \in \mathbb{R}$, the formal hamiltonian reads 
\begin{equation} \label{formal}
H_{\underline{\gamma}, Y}\, 
\psi :=`` - \frac{d^{2}}{dx^{2}} \, \psi + \gamma_1 \delta_{y_1} \psi + \gamma_2 \delta_{y_2} \psi\,``,
\end{equation}
where the (reduced) Planck constant $\hbar$ has been taken equal to one  and the particle mass $m$ equal to $1/2$.\\
A rigorous definition of $H_{\underline{\gamma}, Y}$ in dimension $d=1$ has been given in the early days of quantum mechanics, when such kind of hamiltonians were extensively used to investigate the dynamics of a quantum particle in various kinds of short range scatterer arrays. A complete characterization of point interaction hamiltonians in $3 \geq d >1$ was only made available in the second half of last century (see \cite{SAlbeverio:2011vta} for details and for an exhaustive bibliography).

\noindent
Restricted to the case of our interest, definition and main results in $d=1$ are shortly detailed below. Assume that the two points are placed symmetrically with respect to the origin and that $|y_{i}|=a$.  Then 

\begin{equation} \label{dom}
D(H_{\underline{\gamma}, Y}) := \Big\{ \psi \in L^{2}(\mathbb{R}) \;| \; \psi =
\phi^{\lambda} \,-\,   \sum_{i,j=1}^{2} \left( \Gamma^{\lambda}_{\underline{\gamma}} \right)^{-1}_{ij} \phi^{\lambda}(y_{j}) G^{\lambda}
(\cdot - y_{i}), \; \phi^{\lambda} \in H^{2}(\mathbb{R}) \Big\} \,,
\end{equation}
\be\label{oper}
\left(H_{\underline{\gamma}, Y} + \lambda\right) \psi = \left(-  \frac{d^{2}}{dx^{2}} + \lambda\right) \phi^{\lambda},
\ee
are domain and action of a selfadjoint operator in $L^{2}(\mathbb{R})$ which acts as  the free laplacian on functions supported outside the two points $y_{i}=\pm a$.
\noindent
In (\ref{dom}) $G^{\lambda}(\cdot)$ is the Green function for the free laplacian, given by
\begin{equation}\label{GG}
G^{\lambda}(x):=\frac{e^{-\sqrt{\lambda}|x|}}{2\sqrt{\lambda}},
\end{equation}
and the matrix $\Gamma^{\lambda}_{\underline{\gamma}}$ is defined as
\be \label{bcond}
 \left( \Gamma^{\lambda}_{\underline{\gamma}} \right)_{ij}\ :=  \frac{1}{\gamma_{i}}  \, \delta_{ij}  + G^{\lambda} (y_{i} - y_{j})\,,
\ee
where the positive real number $\lambda$ is chosen large enough to make the matrix $\Gamma^{\lambda}_{\underline{\gamma}}$ invertible.

\noindent
It is immediate to check that the derivative of $G^{\lambda}(x)$ has a jump in the origin, equal to $-1$. This in turn implies that every function $\psi\,$ in the domain satisfies the boundary conditions  
\be \label{bc}
\frac{d \psi}{dx}\left(y_{j}^{+}\right) -\frac{d \psi}{dx}\left(y_{j}^{-}\right)  = \gamma_{j} \,\psi (y_{j})\,, \quad j=1,2\,.
\ee
The dynamics generated by $H_{\underline{\gamma}, Y}$ is then characterized as the free dynamics outside the two scatterers, satisfying at any time the boundary conditions (\ref{bc}).
 
\noindent
Our aim is to investigate the behaviour of the solutions to the non-autonomous evolution problem
\begin{equation}\label{SCH}
\left\{
\begin{array}{l}
\ds {\imath}\, \frac{\partial \psi}{\partial t}  = H_{\underline{\gamma} (t), Y}\, \psi\,, \quad \forall (t,x) \in \RR^+ \times \RR\,,\\[3mm]
\ds \psi(0,x)=\psi_0(x) \in D(H_{\underline{\gamma}(0), Y})\quad \forall x \in \RR\,, \\[3mm]
\ds \gamma_j(t):= \gamma |\psi(t,y_j)|^{2 \sigma}, \,\,\,\gamma < 0, \,\,\, \sigma \geq 0.
\end{array}
\right.
\end{equation}
where the time dependence of $\underline{\gamma}$ is non-linearly determined by the values in $ \pm a$ of the solution itself.
%$$
%H: {\mathcal D} \subset L^2(\RR) \rightarrow L^2(\RR)\,,
%$$
%with definition domain
%$$
%{\mathcal D} := \{\psi \in H^{2}(\RR\backslash Y)\cap H^{1}(\RR)\;| \; \partial_x \psi(y_{j}^+) - \partial_x \psi (y_{j}^-)=\gamma_{j}(t)\,\psi(y_{j})\,,\,\,\, j=1,2\}
%$$
%and action
%$$
%H\, \psi := -{\hbar^2 \over 2m} \Delta \, \psi \,, \qquad \forall \psi \in {\mathcal D}\,.
%$$
%In particular, the potential strength will be given in our case by a power law $\gamma_j(t):= \gamma |\psi(t,y_j)|^{2 \sigma}$,  $\gamma_{1} = \gamma_{2} = \gamma <0$ (focusing case) and $\sigma \ge 0$. 
%The two point interactions are situated symmetrically with respect to the origin in the points $Y:=\{y_1,y_2\}$ with $y_1=-\frac{d}{2}$, $y_2=\frac{d}{2}$ and $|y_2-y_1|=:d>0$. Remark that, at each time $t$, the potential strength in the nonlinear case ($\sigma >0$) is different in each $y_{j}$; this situation motivates the  analysis in Section \ref{SEC22}. 

%\noindent
%The Hamiltonian $H$ introduced above describes nothing else than the energy of a quantum system containing two point interactions of strengths $\gamma_j(t)$, situated at the points $y_j$. \\

\noindent
An alternative way to examine the Cauchy problem (\ref{SCH}) is to write down Duhamel's formula corresponding to the formal Hamiltonian  (\ref{formal}) with the coupling constants $\underline{\gamma}$ given in (\ref{SCH}), then prove that the boundary conditions are satisfied at each time.

\noindent
In detail, let $U(\tau, y)$ be the integral kernel of the unitary group $\displaystyle e^{\imath t \Delta}$
$$
U(\tau,y):={e^{{\imath}{|y|^2 \over 4\tau}} \over \sqrt{4 {\imath}\, \pi\, \tau}}\,, \quad (\mathcal{U}(t)\xi)(x)= \int_{-\infty}^\infty U(t;x-y)\, \xi(y)\, dy\,\,\,\,\,\,\,\,\,\,\,\, \,\,\,\,\,\,\,\, \forall \xi \in L^{2}(\RR).
$$
Then from the ansatz
\begin{equation}\label{duhamel_0}
\psi(t,x)=(\mathcal{U}(t)\psi_{0})(x)-{\imath}\,\gamma\sum_{j=1}^{2}\int_{0}^{t} U(t-s;x-y_{j})|\psi(s,y_{j})|^{2\sigma}\psi(s,y_{j})\, ds\,,
\end{equation}
one obtains for $i=1,2$
\begin{equation}\label{duhamel}
\psi(t,y_{i})=(\mathcal{U}(t)\psi_{0})(y_{i})-{\imath}\,\gamma\sum_{j=1}^{2}\int_{0}^{t} U(t-s;y_{i}-y_{j})|\psi(s,y_{j})|^{2\sigma}\psi(s,y_{j})\, ds.
\end{equation}
Explicitly 
 
\be \label{VOLT}
\left\{
\begin{array}{l}
\ds \psi(t,-a)+{\gamma \over 2} \sqrt{{\imath} \over  \pi} \, \int_0^t  { \psi(s,-a)\, |\psi(s,-a)|^{2 \sigma} \over \sqrt{t-s}}\, ds + {\gamma \over 2} \sqrt{{\imath} \over  \pi} \, \int_0^t  { \psi(s,a)\, |\psi(s,a)|^{2 \sigma} \over \sqrt{t-s}} \, e^{{\imath} { a^2 \over {(t-s)}}}\, ds\\[3mm]
\ds \hspace{10cm} = ({\mathcal U}(t)\, \psi_0)(-a)\,,\\[3mm]
\ds \psi(t,a)+{\gamma \over 2} \sqrt{{\imath} \over  \pi} \, \int_0^t  { \psi(s,a)\, |\psi(s,a)|^{2 \sigma} \over \sqrt{t-s}}\, ds + {\gamma \over 2} \sqrt{{\imath} \over  \pi} \, \int_0^t  { \psi(s,-a)\, |\psi(s,-a)|^{2 \sigma} \over \sqrt{t-s}} \, e^{{\imath} { a^2 \over {(t-s)}}}\, ds\\[3mm]
\ds \hspace{10cm} = ({\mathcal U}(t)\, \psi_0)(a)\,.
\end{array}
\right.
\ee
It is easy to check that a function of the form (\ref{duhamel_0}) satisfies the non-linear boundary conditions at all times (see 
\cite{Adami:2001bt} for details). Following a standard use  in higher dimensional cases,   we will often employ in this paper the notation $q_{1}(t)  \equiv \psi(t,-a) , \,\,q_{2}(t)  \equiv \psi(t,a)$ and refer to (\ref{VOLT}) as the ``charge equations''. \\ 

\vspace{.2cm}
\noindent
The Cauchy problem (\ref{SCH}) is then reduced to the computation of $({\mathcal U}(t)\, \psi_0)(\pm a)$ and the solutions of the system  \eqref{VOLT}, corresponding to two coupled nonlinear Volterra integral equations. The whole wave-function is then recovered via \eqref{duhamel_0}.\\

\noindent

In the following we will show that the linear case $\sigma =0$ is characterized by the presence of almost stationary states whose wave function evolves periodically between one well and the other (beating states).  Along the lines traced by many authors in the past (see \cite{Davies:1979ep},\cite{Davies:1995gb} and \cite{Grecchi:2002bh}), we will then show that the nonlinearity destroys the beating phenomenon. The reduction in complexity we obtain, using linear and non-linear point interactions, makes the investigation of the theoretical and computational aspects of the problem easier. In order to better understand how the beating effect occurs and the reasons why one expects suppression of beating by nonlinear perturbation, we develop in Sections \ref{SEC21} and 
\ref{SEC22}  the symmetric and antisymmetric linear cases in some detail.\\

\subsection{Linear point interactions - Symmetric double well} \label{SEC21} 

\noindent
Let us consider the symmetric linear case, corresponding to $\sigma=0$ and $\gamma_1=\gamma_2=\gamma$. We will show that the eigenstates relative to the lowest eigenvalues are explicitly computable for the hamiltonian we consider.% and one can simply recognize the beating effect of the nitrogen atom, which is nothing else than a periodic motion of the atom passing from one potential well to the other.\\

\noindent
In fact, applying $(H_{\underline{\gamma}, Y} + \lambda)^{-1}$ to both sides of  (\ref{oper}) and using (\ref{dom}) we obtain that  for all $\xi \in H^{2}(\RR)$ one has
\begin{equation}
(H_{\underline{\gamma}, Y} + \lambda)^{-1} \xi = \left(-  \frac{d^{2}}{dx^{2}} + \lambda\right)^{-1} \xi -  \sum_{i,j=1}^{2} \left( \Gamma^{\lambda}_{\underline{\gamma}} \right)^{-1}_{ij}  \left[\left(-  \frac{d^{2}}{dx^{2}} + \lambda\right)^{-1} \xi \right] (y_{j}) G^{\lambda}
(\cdot - y_{i}) \nonumber \\
\end{equation}

\n
which implies that the integral kernel of the resolvent is 
\be\label{respi}
(H_{\underline{\gamma}, Y} +\lambda)^{-1} (x,x')= G^{\lambda}(x - x') - \sum_{i,j =1}^2  \left( \Gamma^{\lambda}_{\underline{\gamma}} \right)^{-1}_{ij} G^{\lambda}(x - y_i) G^{\lambda} (x' - y_j).
\ee

\noindent
As it is clear from (\ref{respi}), the resolvent of $H_{\underline{\gamma}, Y}$  is a finite rank perturbation of the free laplacian resolvent operator. From the kernel representation of the resolvent, spectral  and scattering properties of the operator $H_{\underline{\gamma}, Y}$ are easily inquired in the case of  interactions of equal strength  $\gamma_{i}=\gamma$ (see \cite{SAlbeverio:2011vta}, Theorem 2.1.3).

\noindent
Only the second term appearing in the formula for the resolvent (\ref{respi}) can have polar singularities for those positive values of $\lambda$ for which the matrix $\Gamma^{\lambda}_{\underline{\gamma}} $ is not invertible . In particular,  $- \lambda$ will be a negative eigenvalue of $H_{\underline{\gamma}, Y}$ if and only if
 \be \label{determinant}
\textrm{det} \,   \Gamma^{\lambda}_{\underline{\gamma}}  = 0.
\ee
In the case of two point interactions of the same strength this condition reads 
\begin{equation}\label{det}
\textrm{det}\, \left( \begin{matrix}\frac{1}{\gamma}+\frac{1}{2\sqrt{\lambda} }& G^{\lambda}(2 a)\\G^{\lambda}(2 a)&\frac{1}{\gamma}+\frac{1}{2\sqrt{\lambda}}\end{matrix} \right)=0.
\end{equation}
For $\gamma<-\frac{1}{a}$ there are two solutions $\lambda_{f,e}>0$ to the previous equation. The indices $"f,e"$ stand for ``fundamental'' resp. first ``excited'' state. The corresponding eigenfunctions are
%\begin{figure}[h]
%    \centering
%\includegraphics[width=6cm]{twosol.pdf}
% \caption{In the figure there is a plot of the function (\ref{det})}
 %   \label{twoen}
%\end{figure}

\begin{equation}\label{PF}
\phi_{f}(x)=N_{f}\left(G^{\lambda_f}(x+a)+G^{\lambda_f}(x-a)\right)%,\qquad N_{0}=\sqrt{\frac{4\pi\sqrt{\lambda_f}}{1+e^{-\sqrt{\lambda_f}d}}}
\end{equation}
\begin{equation}\label{PE}
\phi_{e}(x)=N_{e}\left(G^{\lambda_{e}}(x+a)-G^{\lambda_{e}}(x-a)\right)\,,%, \qquad N_{1}=\sqrt{\frac{4\pi\sqrt{\lambda_{1}}}{1-e^{-\sqrt{\lambda_{1}}d}}}
\end{equation}
where $N_{f}$ and $N_{e}$ are easily computable normalization factors.

In Fig. \ref{twoen} we plotted the two eigenstates $\phi_f(x)$ and $\phi_e(x)$, corresponding to the fundamental state (symmetric function) and the first excited state (anti-symmetric function). Notice that the two eigenstates are relative to energies getting closer and closer as the value of $|\gamma|$ increases (see the remark below). In the same limit the absolute values of the two eigenfunctions tend to coincide.\\
\begin{figure}[H]
    \centering
\includegraphics[width=8cm]{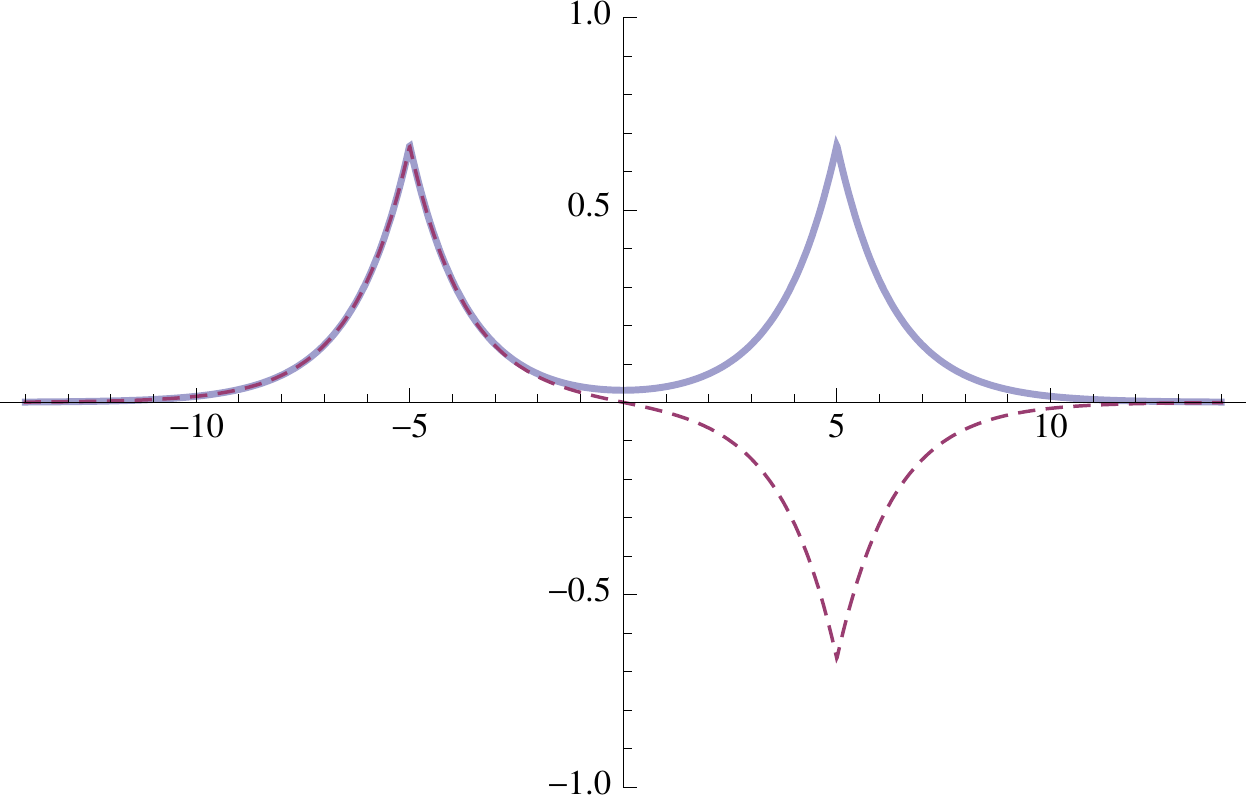}
 \caption{Plot of the functions $\phi_{f}(x)$ with a  thicked blue line and $\phi_{e}(x)$ with a dashed line.}
    \label{twoen}
\end{figure}

\noindent
The stationary solutions corresponding to these eigenstates are given by
$$
\psi_f(t,x)=e^{{\imath}\, \lambda_f\, t} \phi_f(x)\,, \quad \psi_e(t,x)=e^{{\imath}\, \lambda_e\, t} \phi_e(x)\,.
$$

\noindent
A superposition of the two stationary states,
$$
\psi_0 (x)=\alpha\, \phi_f(x)+ \beta\, \phi_e(x)\,, \qquad \alpha,\, \beta \in \RR\,,
$$
evolves as
$$
\psi (t,x)=\alpha\, e^{{\imath}\, \lambda_f\, t} \phi_f(x)+ \beta\,e^{{\imath}\, \lambda_e\, t} \phi_e(x)\,.
$$ 
 
\noindent
In particular the superposition 
\be \label{beatINI}
\psi^{L}_{beat,0}(x):=\frac{1}{\sqrt{2}}\left(\phi_{f}(x)+\phi_{e}(x)\right)\,,
\ee
is concentrated in the left well and will evolve in time as 

\begin{equation}\label{lbeat0}
\psi_{beat}^{L}(t,x) =\frac{1}{\sqrt{2}}\left(e^{{\imath}\lambda_f t}\phi_{f}(x)+e^{{\imath}\lambda_e t}\phi_{e}(x)\right)\,,
\end{equation}
with a probability density  given by
\begin{equation}\label{dens}
\mathcal{P}(t,x)=\frac{1}{2}\left[|\phi_{f}(x)|^{2}+|\phi_{e}(x)|^{2}+2\, \phi_{f}(x)\phi_{e}(x)\cos\left( (\lambda_f-\lambda_e)t\right) \right]\,. 
\end{equation}

\noindent 
It is clear that $\psi_{beat}^{L}$ is an oscillating function with period $\displaystyle T_B=\frac{2\pi}{|\lambda_f-\lambda_e|} $  concentrated successively on the left and on the right well, justifying the definition of (\ref{lbeat0}) as a beating state.

\noindent
The values assumed by the function $\psi_{beat}^{L}(t,x)$ in the centers of the two wells evolve as follows (see Figure \ref{q})
\begin{equation}\label{lbeat}
\begin{array}{lll}
q_{1}^{L}(t) &\equiv&\ds  \psi_{beat}^{L}(t,-a)=\frac{1}{\sqrt{2}}\left(e^{{\imath}\lambda_f t}\phi_{f}(-a)+e^{{\imath}\lambda_e t}\phi_{e}(-a)\right)\\[3mm]
q_{2}^{L}(t) &\equiv&\ds \psi_{beat}^{L}(t,a)=\frac{1}{\sqrt{2}}\left(e^{{\imath}\lambda_f t}\phi_{f}(a)+e^{{\imath}\lambda_e t}\phi_{e}(a)\right)\,.
\end{array}
\end{equation}

\begin{figure}[H]
    \centering
\includegraphics[width=10cm]{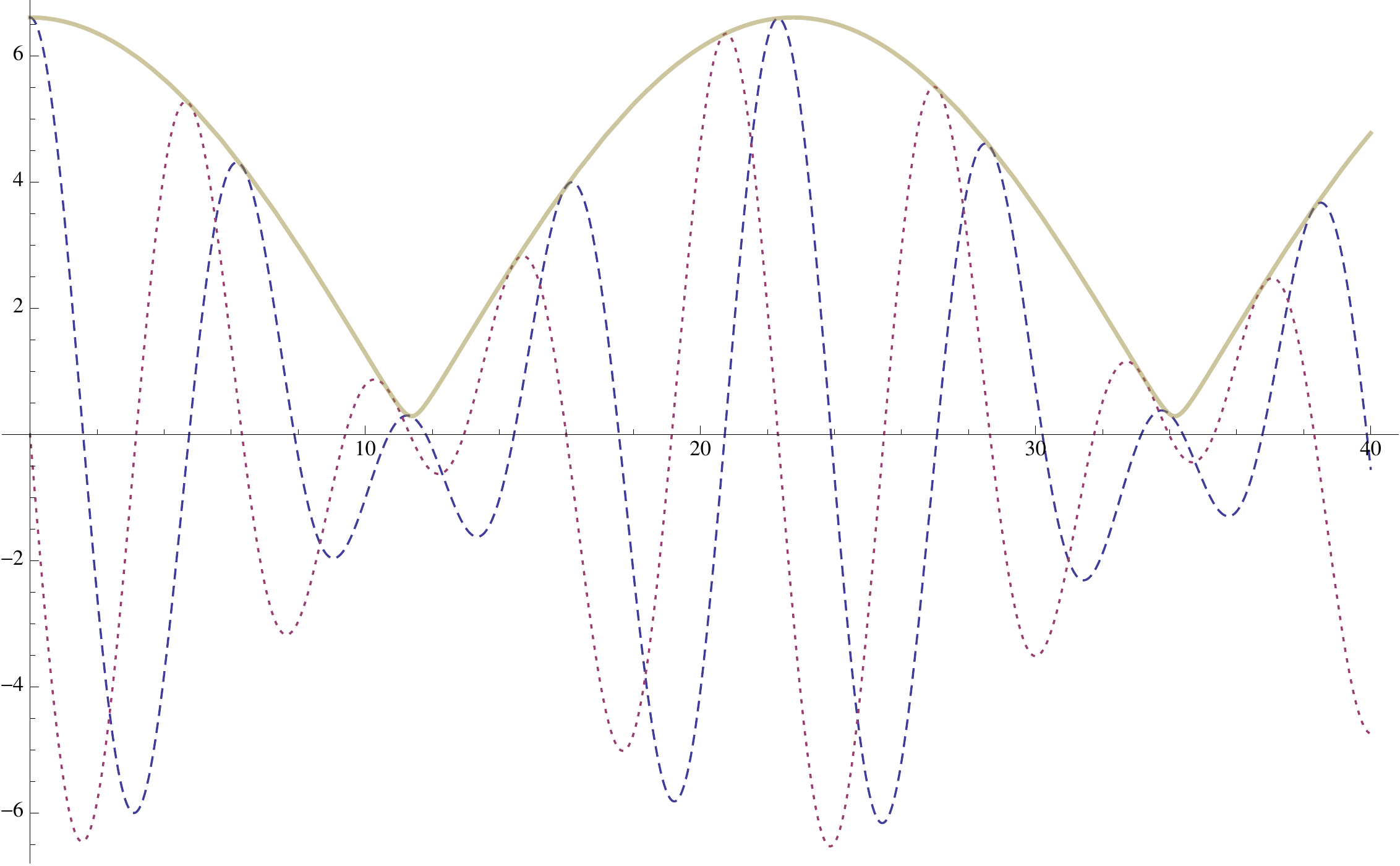}
 \caption{Plot of the time-evolution of the functions $\textrm{Re}\,q^{L}_{1}(t)$ as a dashed line, $\textrm{Im}\,q^{L}_{1}(t)$ as a dotted line and $|q^{L}_{1}|(t)$ as a thick line.}
    \label{q}
\end{figure}
\begin{remark} Many authors analyzed the energy difference between the fundamental and the first exited state of a hamiltonian with double well potential in the semi-classical limit, roughly referred to as $\hbar \rightarrow 0$ (see e.g. \cite{Claverie:1986ee}).  In the notes  \cite{Teta:2016vr} a detailed computation of the energy difference for a point interaction hamiltonian with two attractive zero range potentials is performed,  keeping all the standard dimensions of the physical constants. In terms of the dimensionless constant $\displaystyle \delta=\frac{2 m |\gamma| a}{\hbar^{2}}$ it is proved that in the limit $\delta \gg 1$
%\begin{equation}
%\left(1+\frac{2\sqrt{\lambda\hbar}}{\sqrt{2\,m}\gamma}\right)^{2}-e^{-4\frac{\sqrt{2m}}{\hbar}\sqrt{\lambda}\,a}=0
%\end{equation}
%Then it is possible to prove that, in the limit  the splitting between the two energy levels is
\begin{equation}
\triangle E \simeq \frac{2\,m\,\gamma^{2}}{\hbar^{2}}e^{-\delta}\,.
\end{equation}
The  exponential decay of the energy difference when $\hbar \rightarrow 0$ is then easily and rigorously obtained in the case of a zero range double well. Furthermore the result clarifies that  the semiclassical limit  is characterized by $\delta \rightarrow \infty$, which in our units reads $|\gamma| a \gg 1$.
%Let us observe that the limit $\delta\gg 1$ can be  reached when: the two interaction points are far away , i.e. the  distance between them is large compared to the square of the Plank constant for a fixed values of mass $m$ and scattering length $\gamma$; the scattering length is large compared to the square of the Plank constant for a fixed values of mass $m$ and the distance $2\,a$.

\end{remark}
%%%%%%%%%%%%%%%%%%%%%
\subsection{Linear point interactions - Asymmetric double well} \label{SEC22} 
%%%%%%%%%%%%%%%%%%%%
Let us now investigate the changes in the beating mechanism when the two zero range potentials have different strengths $\gamma_1 \neq \gamma_2$, $\gamma_1<\gamma_2 $.  In this asymmetric case the equation permitting to compute the eigenvalues of the Hamiltonian $H_{\underline{\gamma},\, Y}$ reads:

\begin{equation}\label{gammaij}
\det\Gamma_{(\gamma_{1},\gamma_{2})}^\lambda= \det\, \left(\begin{array}{cc}\frac{1}{\gamma_{1}}+\frac{1}{2\,\sqrt{\lambda}}& G^{\lambda}(2 a) \\ G^{\lambda}(2 a)  & \frac{1}{\gamma_{2}} + \frac{1}{2\,\sqrt{\lambda}}\end{array}\right)=0\,,
\end{equation}
leading to
\begin{equation}\label{eigen}
\left(\frac{1}{\gamma_{1}}+\frac{1}{2\,\sqrt{\lambda}}\right)\left(\frac{1}{\gamma_{2}}+\frac{1}{2\,\sqrt{\lambda}}\right)-\left(\frac{1}{2\,\sqrt{\lambda}}\right)^{2}e^{- 4 \sqrt\lambda \, a}=0\,.
\end{equation}

\noindent
All the relevant results we will need in the following are collected in the following lemma, concerning the resolution of this last equation.
\begin{lemme}\label{lemmino}
Let $\gamma_1 \neq \gamma_2$, $\gamma_1<\gamma_2 $ and let us define the ratio $\displaystyle \alpha := \frac{\gamma_{2}}{\gamma_{1}} $. Then one has:

\begin{description}
\item[a] There are two real solutions $\lambda_{0} > \lambda_{1}>0$ to equation (\ref{eigen}) if and only if $\gamma_{i} < 0$ for $i=1,2$ and
\begin{equation}\label{two}
\frac{1}{|\gamma_{1}|}+\frac{1}{|\gamma_{2}|}< 2a\,.
\end{equation}
\item[b]
For $\gamma_{i} < 0$, $i=1,2$, satisfying \eqref{two} and $\alpha <1$, one has
\[\Delta \lambda := \lambda_{0} - \lambda_{1} \geq \gamma_{1}^{2}  (1 -\alpha^{2})\,. \]
In particular $\Delta \lambda \rightarrow \infty$ as $|\gamma_{1}| \rightarrow \infty$\,.

\vspace{.1cm}
\item[c]  For $\gamma_{i} < 0$, $i=1,2$, satisfying \eqref{two} and $\alpha <1$, one has
\[\lim_ {|\gamma_{1}| \rightarrow \infty}    2 \sqrt{\lambda_{0}} / \gamma_{1} = -1\,, \quad \lim_ {|\gamma_{1}| \rightarrow \infty} 2 \sqrt{\lambda_{1}} /\gamma_{2} = -1\,.\]
\end{description}
\end{lemme}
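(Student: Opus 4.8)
\medskip
\noindent\emph{Proof plan.} The plan is to pass to the variable $\mu:=\sqrt{\lambda}>0$, clear denominators in \eqref{eigen} by multiplying through by $4\mu^{2}$, and study the equivalent equation $F(\mu)=0$ on $(0,\infty)$, where
\[
F(\mu):=\left(1+\frac{2\mu}{\gamma_{1}}\right)\left(1+\frac{2\mu}{\gamma_{2}}\right)-e^{-4a\mu}\,.
\]
Everything hinges on a handful of elementary facts about $F$: one has $F(0)=0$ (so $\lambda=0$ is a spurious root of the cleared equation, discarded from now on); $F(\mu)\to+\infty$ as $\mu\to+\infty$ whenever $\gamma_{1}\gamma_{2}>0$; when $\gamma_{i}<0$ one has $F(|\gamma_{i}|/2)=-e^{-2a|\gamma_{i}|}<0$ and $F'(0)=2\bigl(2a-\tfrac{1}{|\gamma_{1}|}-\tfrac{1}{|\gamma_{2}|}\bigr)$, which is positive exactly under \eqref{two}; and, crucially, $F'''(\mu)=64a^{3}e^{-4a\mu}>0$, so $F''$ is strictly increasing, from $F''(0)=\tfrac{8}{\gamma_{1}\gamma_{2}}-16a^{2}$ up to the limit $\tfrac{8}{\gamma_{1}\gamma_{2}}$. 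With these in hand the rest is bookkeeping on the graph of $F$.

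For part (a) I would argue as follows. Assume $\gamma_{i}<0$ and \eqref{two}, so $F'(0)>0$. The value $F(|\gamma_{2}|/2)<0$ is incompatible with $F$ being convex on $[0,\infty)$ (a convex $F$ would satisfy $F(\mu)\ge F'(0)\mu>0$ there), hence $F''(0)<0$; then $F''$ changes sign exactly once, $F'$ is decreasing then increasing, and since $F'(0)>0$ while $F$ does take negative values, $F'$ has exactly two zeros $\mu_{a}<\mu_{b}$. Thus $F$ increases, then decreases, then increases on $(0,\infty)$; combining this with $F(0)=0$, $F(|\gamma_{2}|/2)<0$, $F(|\gamma_{1}|/2)<0$ and $F\to+\infty$ forces exactly two positive roots $\mu_{1}<\mu_{0}$, located as
\[
0<\mu_{1}<\frac{|\gamma_{2}|}{2}<\frac{|\gamma_{1}|}{2}<\mu_{0}
\]
(the middle inequality being $|\gamma_{2}|<|\gamma_{1}|$, since $\gamma_{1}<\gamma_{2}<0$); then $\lambda_{i}:=\mu_{i}^{2}$ are the claimed solutions $\lambda_{0}>\lambda_{1}>0$. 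For the converse I would dispose of the remaining sign patterns: if $\gamma_{1},\gamma_{2}>0$ then $F(\mu)>1-e^{-4a\mu}>0$ on $(0,\infty)$, no root; if $\gamma_{1}\gamma_{2}<0$ then $F''<0$ everywhere, so $F$ is strictly concave and has at most one positive root; and if $\gamma_{i}<0$ but $F'(0)\le0$, the same shape analysis (now $F$ decreases then increases on $(0,\infty)$) again gives at most one positive root. In no case are there two, which establishes the equivalence.

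Parts (b) and (c) then follow cheaply. For (b): the localization gives $4\lambda_{0}=(2\mu_{0})^{2}>\gamma_{1}^{2}$ and $4\lambda_{1}=(2\mu_{1})^{2}<\gamma_{2}^{2}$, hence $\Delta\lambda>\tfrac14(\gamma_{1}^{2}-\gamma_{2}^{2})=\tfrac14\gamma_{1}^{2}(1-\alpha^{2})$, a positive multiple of $\gamma_{1}^{2}(1-\alpha^{2})$, which indeed $\to+\infty$ as $|\gamma_{1}|\to\infty$ at fixed $\alpha$. For (c): fix $\delta\in(0,1)$ and evaluate $F$ at $m_{-}:=\tfrac{|\gamma_{2}|}{2}(1-\delta)$ and $m_{+}:=\tfrac{|\gamma_{1}|}{2}(1+\delta)$; a one-line computation gives
\[
F(m_{-})=\delta\,(1-\alpha+\alpha\delta)-e^{-2a|\gamma_{2}|(1-\delta)}\,,\qquad F(m_{+})=\frac{\delta}{\alpha}\,(1+\delta-\alpha)-e^{-2a|\gamma_{1}|(1+\delta)}\,.
\]
Since $\alpha<1$, both polynomial prefactors are fixed strictly positive numbers, while $|\gamma_{2}|=\alpha|\gamma_{1}|\to\infty$, so both exponentials die and $F(m_{-})>0$, $F(m_{+})>0$ once $|\gamma_{1}|$ is large; together with $F(|\gamma_{2}|/2)<0$, $F(|\gamma_{1}|/2)<0$ and the uniqueness (from part (a)) of the root of $F$ in $(0,|\gamma_{2}|/2)$ and in $(|\gamma_{1}|/2,\infty)$, this traps $\mu_{1}\in\bigl(\tfrac{|\gamma_{2}|}{2}(1-\delta),\tfrac{|\gamma_{2}|}{2}\bigr)$ and $\mu_{0}\in\bigl(\tfrac{|\gamma_{1}|}{2},\tfrac{|\gamma_{1}|}{2}(1+\delta)\bigr)$ for all large $|\gamma_{1}|$. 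Letting $\delta\downarrow0$ gives $2\mu_{0}/|\gamma_{1}|\to1$ and $2\mu_{1}/|\gamma_{2}|\to1$, i.e.\ (using $\gamma_{i}<0$ and $\mu_{i}=\sqrt{\lambda_{i}}$) exactly $2\sqrt{\lambda_{0}}/\gamma_{1}\to-1$ and $2\sqrt{\lambda_{1}}/\gamma_{2}\to-1$.

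I expect the one genuinely delicate step to be the shape analysis in part (a) — showing there are \emph{exactly} two roots and pinning down their position relative to $|\gamma_{1}|/2$ and $|\gamma_{2}|/2$; the positivity of $F'''$ is precisely what collapses this into the short finite case check above, after which parts (b) and (c) are little more than reading off where $\mu_{0}$ and $\mu_{1}$ sit.
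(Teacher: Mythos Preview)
Your argument is correct and runs along the same rails as the paper's: the paper sets $\xi:=2\sqrt{\lambda}$ (your $2\mu$), rewrites \eqref{eigen} as $P(\xi)=e^{-2\xi a}$ with $P$ the quadratic $\xi^{2}/(\gamma_{1}\gamma_{2})+(\tfrac{1}{\gamma_{1}}+\tfrac{1}{\gamma_{2}})\xi+1$, observes that $P(|\gamma_{i}|)=0<e^{-2|\gamma_{i}|a}$, and concludes that two positive intersections exist iff $P'(0)>-2a$, i.e.\ iff \eqref{two}. The localization $\xi_{1}<|\gamma_{2}|<|\gamma_{1}|<\xi_{0}$ and part (b) are then read off exactly as you do. Where you differ is in rigor and in part (c): the paper's shape argument leans on the convexity of both $P$ and the exponential plus a figure, whereas you close the ``exactly two roots'' claim cleanly via $F'''>0$; and for (c) the paper rescales to $\eta:=\xi/|\gamma_{1}|$ and lets the exponential die so that $\eta_{0}\to1$, $\eta_{1}\to\alpha$, while your explicit squeeze at $(1\pm\delta)|\gamma_{i}|/2$ achieves the same limit without the rescaling. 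Both routes are short; yours is more self-contained, the paper's rescaling in (c) is slightly slicker.

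On part (b): you are right that the localization only gives $\Delta\lambda\ge\tfrac{1}{4}\gamma_{1}^{2}(1-\alpha^{2})$. The paper's own proof (its displayed inequality $4\Delta\lambda\ge\gamma_{1}^{2}(1-\alpha^{2})$) yields precisely the same bound, so the missing factor $1/4$ is a slip in the lemma's statement, not in your argument; the conclusion $\Delta\lambda\to\infty$ is unaffected.
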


\begin{proof}
Defining  $\xi:=2\sqrt{\lambda}$ equation (\ref{eigen}) can be rewritten as
\begin{equation}\label{f}
\frac{\xi^{2}}{\gamma_{1}\gamma_{2}}+\xi \left(\frac{1}{\gamma_{1}}+\frac{1}{\gamma_{2}}\right)+1=e^{-2 \xi a}\,.
\end{equation}
\noindent The number of positive solutions to (\ref{f}) depends on the values of the parameters $\gamma_{i}$ and on the distance $2a$. 

%\noindent
%Let us denote the function on the left side of (\ref{f}) by
%\begin{equation}\label{fi}
%f(\xi):=\frac{\xi^{2}}{\gamma_{1}\gamma_{2}}+\xi \left(\frac{1}{\gamma_{1}}+\frac{1}{\gamma_{2}}\right)+1\,,
%\end{equation}
%Notice that equation \eqref{f} is satisfied for $\xi=0$ and that the derivatives of $f$ are given by
%$$
%f'(\xi)={2\, \xi \over\gamma_{1}\gamma_{2}} + \left(\frac{1}{\gamma_{1}}+\frac{1}{\gamma_{2}}\right) \,, \quad f''(\xi)=\frac{2}{\gamma_{1}\gamma_{2}}.
%$$
%The following cases are possible. \\
%{\bf Case (i): $0<\gamma_{1}<\gamma_{2}$.}
%The derivative of $f$ is positive for any positive $\xi_{cr}$, whereas $e^{-2 \xi a}$ is always decreasing.\\
%The convex function  $f(\xi)$  is then larger than $e^{-2 \xi a}$ for all $\xi>0$ and there are no eigenvalues in this case.\\
%\begin{figure}[h]
   % \centering
%\includegraphics[width=4cm]{noeigen.pdf}
% \caption{Plot of the functions $f(\xi)$ in red, $e^{-2 \xi a}$ dashed  for $a=1/2$, $\gamma_{1}=1$ and $\gamma_{1}=3$}
  %  \label{qq}
%\end{figure}

\noindent
\begin{description}
\item[a] For $\gamma_{1}<\gamma_{2}<0$  both the right and the left side of (\ref{f}) are convex functions of $\xi$, taking the common value $1$ when $\xi =0$. Furthermore, denoting by $P(\xi)$ the left hand side, we have that $P( |\gamma_{i}|)  = 0 < e^{-2 |\gamma_{i}| a}$ for $i =1,2 \,$. We deduce that there are two solutions to  (\ref{f}) if and only if the derivative of $P(\xi)$ for $\xi = 0$ is larger than $-2a$ (see Figure \ref{qq}).\\

\noindent
It is easy to check that if at least one of the $\gamma$ is positive there cannot be two positive solutions of (\ref{f}).

%\vspace{-1cm}

\begin{figure}[H]
 \centering
 \includegraphics[width=8cm]{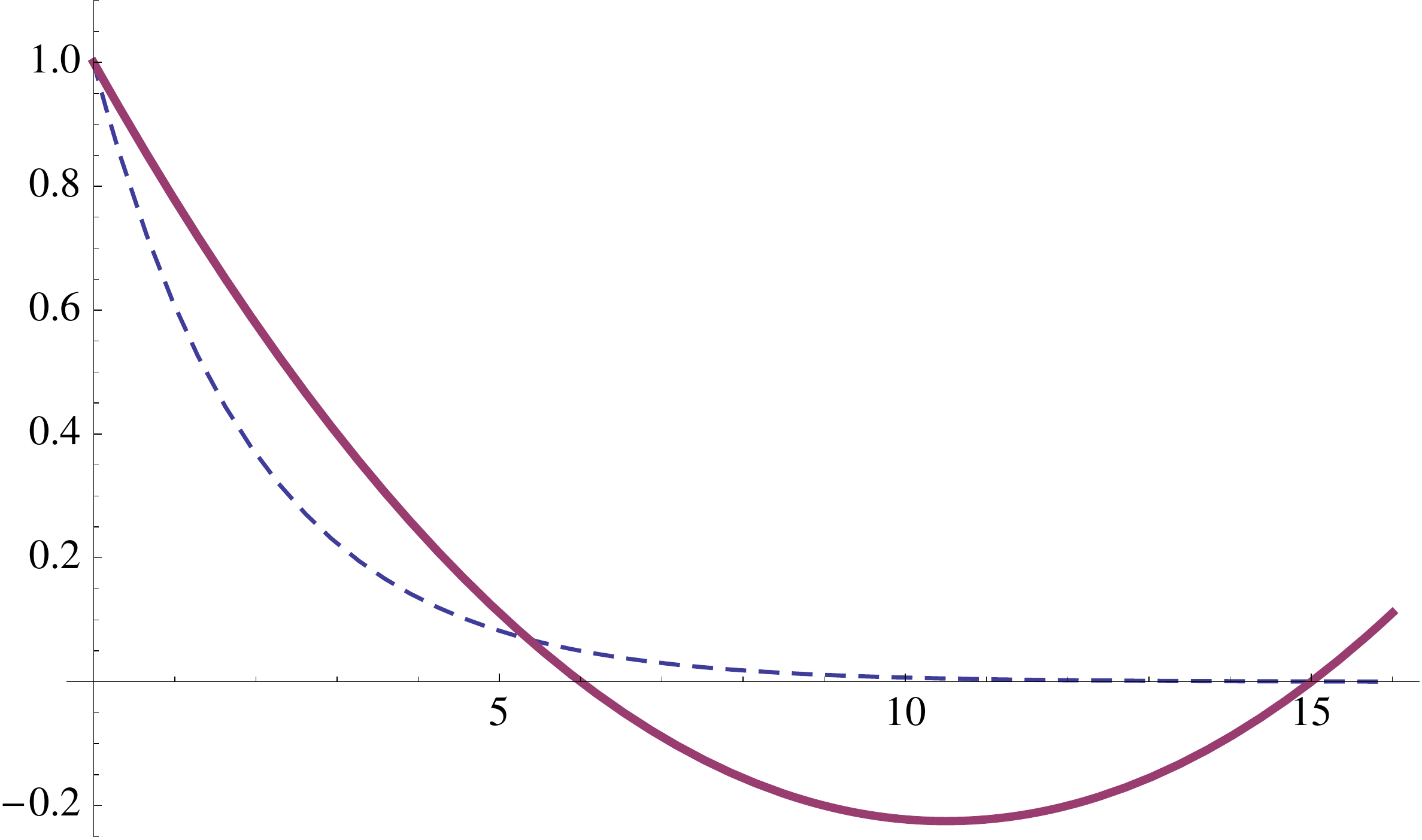}
 \caption{Plot of the functions $P(\xi)$ in red, $e^{-2 \xi a}$ dashed for $a=1/2$, $\gamma_{1}=-8$ and $\gamma_{2}=-4$.}
\label{qq}
\end{figure}

%\vspace{-0.1cm}

\item[b]
As it is clear from Figure \ref{qq}, the two solutions $\xi_{0} > \xi_{1}$ to (\ref{f}) are such that $\xi_{0} > \gamma_{1}$ and  $\xi_{1} < \gamma_{2}$. As a consequence

%\vspace{-2cm}
\be\label{delta}
 4 \Delta \lambda = \xi_{0}^{2} - \xi_{1}^{2} \, \geq \, \gamma_{1}^{2} - \gamma_{2}^{2}  = \gamma_{1}^{2} (1 - \alpha^{2}). 
\ee

%\vspace{-2cm}

\noindent
In the semi-classical regime $ |\gamma_{i}|\, a \gg 1$, $i=1,2$ (see remark at the end of previous subsection), (\ref{delta}) implies that the energy difference becomes larger and larger whenever $\alpha \neq 1$ (see Figure \ref{deltaE}). It is worth recalling that for $\alpha =1$ (symmetric case) the energy difference goes to zero in the same limit.\\

% \vspace{-.3cm}
% \begin{figure}[H]
%  \centering
% \includegraphics[width=10cm]{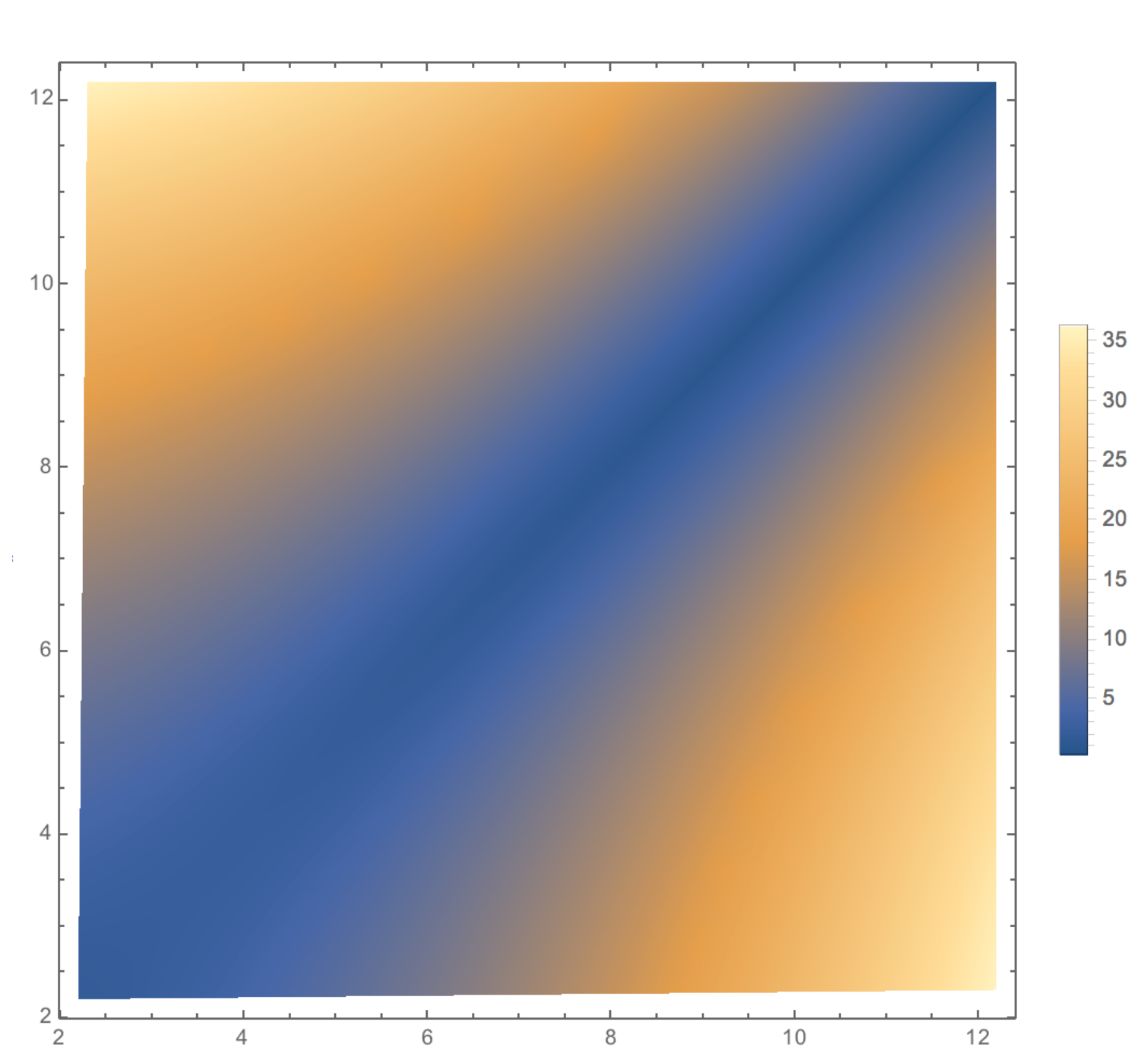}
%  \caption{Contour plot of the functions $|\lambda_{0} - \lambda_{1}|$ solutions to the equation (\ref{eigen}) as a function of $\underline{\gamma}$ for $-12 \leq \gamma_{i} \leq -2\,\,\,\,\, i = 1,2$, $a=1$ .}
%  \label{deltaE}
% \end{figure}
\item[c] 
Rewriting  (\ref{f}) in terms of $\eta := \xi/|\gamma_{1}|$ and $\alpha$ we obtain 
\begin{equation}\label{fe}
\frac{\eta^{2}}{\alpha} - \left(\frac{1 +\alpha}{\alpha}\right) \eta + 1=e^{-2 \eta |\gamma_{1}| a}\,.
\end{equation}
Both solutions $\eta_{0}$ and $\eta_{1}$ are strictly larger than zero. In turn, this implies that in the semi-classical limit $|\gamma_{i}|\, 2a \gg 1 \,\,\,\,i =1,2$, the exponential term  in (\ref{fe}) becomes negligible with respect to $1$ and $ \eta_{0} \rightarrow  1$, $\,\,\, \,\eta_{1}  \rightarrow  \alpha$.
\end{description}
%\vspace{-.3cm}
\begin{figure}[H]
 \centering
\includegraphics[width=10cm]{}
 \caption{Contour plot of the functions $|\lambda_{0} - \lambda_{1}|$ solutions to the equation (\ref{eigen}) as a function of $|\gamma_{1}|$ and  $|\gamma_{2}|$ for $-12 \leq \gamma_{i} \leq -2\,\,\,\,\, i = 1,2$, $a=1$ .}
 \label{deltaE}
\end{figure}

\end{proof}

\noindent
The above results suggest that in the semi-classical limit with $\gamma_{1} \neq \gamma_{2}$ the fundamental eigenstate approaches the eigenstate of a single point interaction of strength $\gamma_{1}$ placed in $-a$ whereas the excited state approaches the eigenstate of a single point interaction of strength $\gamma_{2}$ placed in $a$.\\
In order to make this aspect clearer, we detail the steps needed to perform an exact computation of the eigenfunctions associated to the two eigenvalues.

\noindent
The normalized eigenfunction relative to the lowest eigenvalue  $\displaystyle E_0=- \lambda_{0}=-\frac{\xi_{0}^{2}}{4} < 0$ has the form (see again Theorem 2.1.3 in \cite{SAlbeverio:2011vta}) 

\begin{equation}\label{eigenfunction0}
\phi_0(x)=c_{0}G^{\lambda_{0}}(x-y_{1})+c_{1\,}G^{\lambda_{0}}(x-y_{2})\,,
\end{equation}
where $y_{1}=-a$, $y_{2}=a$ and the coefficient $c_{0},c_{1}$ are solutions of
\begin{equation}\left(\begin{array}{cc}\frac{1}{\gamma_{1}}+\frac{1}{2\,\sqrt{\lambda_0}}& \frac{1}{2\sqrt{\lambda_0}}e^{-2\,\sqrt{\lambda_0}\,a} \\ \frac{1}{2\sqrt{\lambda_0}}e^{-2\,\sqrt{\lambda_0}\,a}  & \frac{1}{\gamma_{2}} + \frac{1}{2\,\sqrt{\lambda_0}}\end{array}\right)\left(\begin{array}{c}c_{0}\\c_{1}\end{array}\right)=\left(\begin{array}{c}0\\0\end{array}\right)\,,
\end{equation}
which, together with (\ref{eigen}) gives
\begin{equation} \label{firsteig}
\left|\frac{c_{1}}{c_{0}}\right| =
%-c_{0}\frac{\gamma_{1}+\sqrt\lambda_{0} }{\gamma_{1}}\,e^{\sqrt\lambda_{0} 2a}=
  \sqrt{\frac{(2\sqrt{\lambda_{0}}/ \gamma_{1}) +1}{
(2\sqrt{\lambda_{0}}/ \gamma_{2}) +1}}\,. 
\end{equation}
The normalization condition finally gives for $c_{0}$
\begin{equation}\label{czero}
c_{0}=\frac{2 |\gamma _1| \lambda _0^{3/4}}
{\sqrt{\gamma _1\gamma _2\frac{\left(\gamma _1+2 \sqrt{\lambda _0}\right)}{\left(\gamma _2+2 \sqrt{\lambda _0}\right)} +\gamma _1 \left( \gamma _1+4 \sqrt{\lambda _0} +2 \sqrt{\lambda _0}\, a\left(\gamma _1 +2 \sqrt{\lambda _0}\right)\right)}}\,.
\end{equation}

%\begin{equation} \label{firsteig}
%c_{1}=\frac{2 \lambda _0^{3/4} \left(\gamma _1+2 \sqrt{\lambda _0}\right)}{e^{-d \sqrt{\lambda _0}} \sqrt{\left(\gamma _1+2 \sqrt{\lambda _0}\right){}^2 e^{2 d \sqrt{\lambda _0}}+\gamma _1 \left(3 \gamma _1+2 \sqrt{\lambda _0} \left(\gamma _1 d+2\right)+4 d \lambda _0\right)}}. 
%\end{equation}

\noindent
Under the assumptions we made on $\underline{\gamma}$, there will be a second eigenvalue $E_{1}>E_{0}$, $\,\,E_{1}=- \lambda_{1}<0$ whose corresponding normalized eigenfunction has the form
\begin{equation}\label{eigenfunction1}
\phi_1(x)=c_{2} G^{\lambda_{1}}(x-y_{1})+c_{3}G^{\lambda_{1}}(x-y_{2})\,,
\end{equation}
where 
\begin{equation} \label{seceig}
\left| \frac{c_{2}}{c_{3}} \right |=\, \sqrt{\frac{(2\sqrt{\lambda_{1}}/ \gamma_{2}) +1}{
(2\sqrt{\lambda_{1}}/ \gamma_{1}) +1
}}\,,
\end{equation}
with 
\begin{equation}\label{ctre}
c_{3}=\frac{2 |\gamma _2| \lambda _1^{3/4}}
{\sqrt{\gamma _1\gamma _2\frac{\left(\gamma _2+2 \sqrt{\lambda _1}\right)}{\left(\gamma _1+2 \sqrt{\lambda _1}\right)} -\gamma _2 \left( \gamma _2+4 \sqrt{\lambda _1} +2 \sqrt{\lambda _1}\, a\left(\gamma _2 +2 \sqrt{\lambda _1}\right)\right)}}\,.
\end{equation}

\vspace{.2cm}
\noindent
The thus obtained functions \eqref{eigenfunction0} resp. \eqref{eigenfunction1} are the eigenfunctions corresponding to the eigenvalues $\lambda_0,\lambda_1$. The initial condition we shall choose in the asymmetric case will be of the form
\be \label{AsyINIT}
\psi_{asy,0}(x):= \alpha\, \phi_0(x)+\beta\, \phi_1(x)\,, \quad \alpha,\beta \in \RR\,,
\ee
 where the exact time-evolution of this state is given by
 \be \label{EX_ASY}
 \psi_{asy}(t,x):= \alpha\, e^{{\mathbf i} \, \lambda_0\, t}\, \phi_0(x)+\beta\,e^{{\mathbf i} \, \lambda_1\, t}\, \phi_1(x)\,.
 \ee

Let us also remark here that from (\ref{firsteig}) and (\ref{seceig}) we deduce that both $\displaystyle \left|\frac{c_{1}}{c_{0}}\right|$ and 
$\displaystyle \left|\frac{c_{2}}{c_{3}} \right|$ become negligible in the semi-classical regime, if $\displaystyle\alpha  = \frac{\gamma_{2}}{\gamma_{1}} <1$.  Taking into account the normalization factors (\ref{czero}), (\ref{ctre}) and part {\bf (c)} of lemma (\ref{lemmino}) we finally obtain that the fundamental state tends to $\lambda_{0}^{3/4} G^{\lambda_{0}}(x+a)$ and the excited state to $\lambda_{1}^{3/4} G^{\lambda_{1}}(x-a)$(see Figure \ref{fond ecc}). 
%For small value of $\delta$ we get that
%\begin{equation}
%c_{0}=\frac{1}{\sqrt{2}}+\frac{3 \delta }{2 \sqrt{2}}+\mathcal{O}({\delta^{2}})
%\end{equation}
%\begin{equation}
%c_{1}=\left(\frac{1}{\sqrt{2}}+\frac{3 \delta }{2 \sqrt{2}}\right)\left(\frac{1}{2(1+\delta)}\sqrt{4(1+\delta)+\frac{\delta^{2}}{e^{d(1+\delta)}}}-\frac{\delta}{e^{d(1+\delta)}}\right)+\mathcal{O}({\delta^{2}})
%\end{equation}
%\begin{equation}
%c_{2}=\frac{\left(e^{-\frac{d}{2}}-1\right)^{3/2}}{\sqrt{2} \sqrt{-d e^{-d} \left(e^{a/2}-1\right)-2 e^{-\frac{d}{2}}+e^{-d e^{-\frac{d}{2}}}+1}}+\left(-\frac{d}{2 \sqrt{2}}-\frac{1}{\sqrt{2}}\right) \delta +\mathcal{O}({\delta^{2}})\end{equation}
%\begin{equation}
%c_{3}=\left(\frac{1}{\sqrt{2}}+\left(-\frac{d}{2 \sqrt{2}}-\frac{1}{\sqrt{2}}\right)\right)
%\end{equation}

%\begin{figure}[htbp]
%\begin{center}
%\includegraphics[width=8cm]{c0c1}\hfill
%\includegraphics[width=8cm]{c2c3}
%\end{center}
%\caption{\label{fond ecc} 
%{\footnotesize Plot of $c_{0}$ (thick blue line), $c_{1}$ (dashed orange line) and $c_{2}$  (dotted red line) and $c_{3}$ (black line) as functions of $\delta = \gamma_{2} - \gamma_{1}$ for $d=5$.}}
%\end{figure}

%\begin{figure}[H]
%\begin{center}
%\includegraphics[width=6cm]{deltazerofond}\hfill
%\includegraphics[width=6cm]{deltazeroecc}
%\end{center}
%\caption{\label{fond ecc} 
%{\footnotesize Plot of $\phi_0(x)$ and $\phi_1(x)$ for $\gamma_{1}=\gamma_{2}=-1$, $a=5$ .}}
%\end{figure}

\begin{figure}[H]
\begin{center}
\includegraphics[width=8cm]{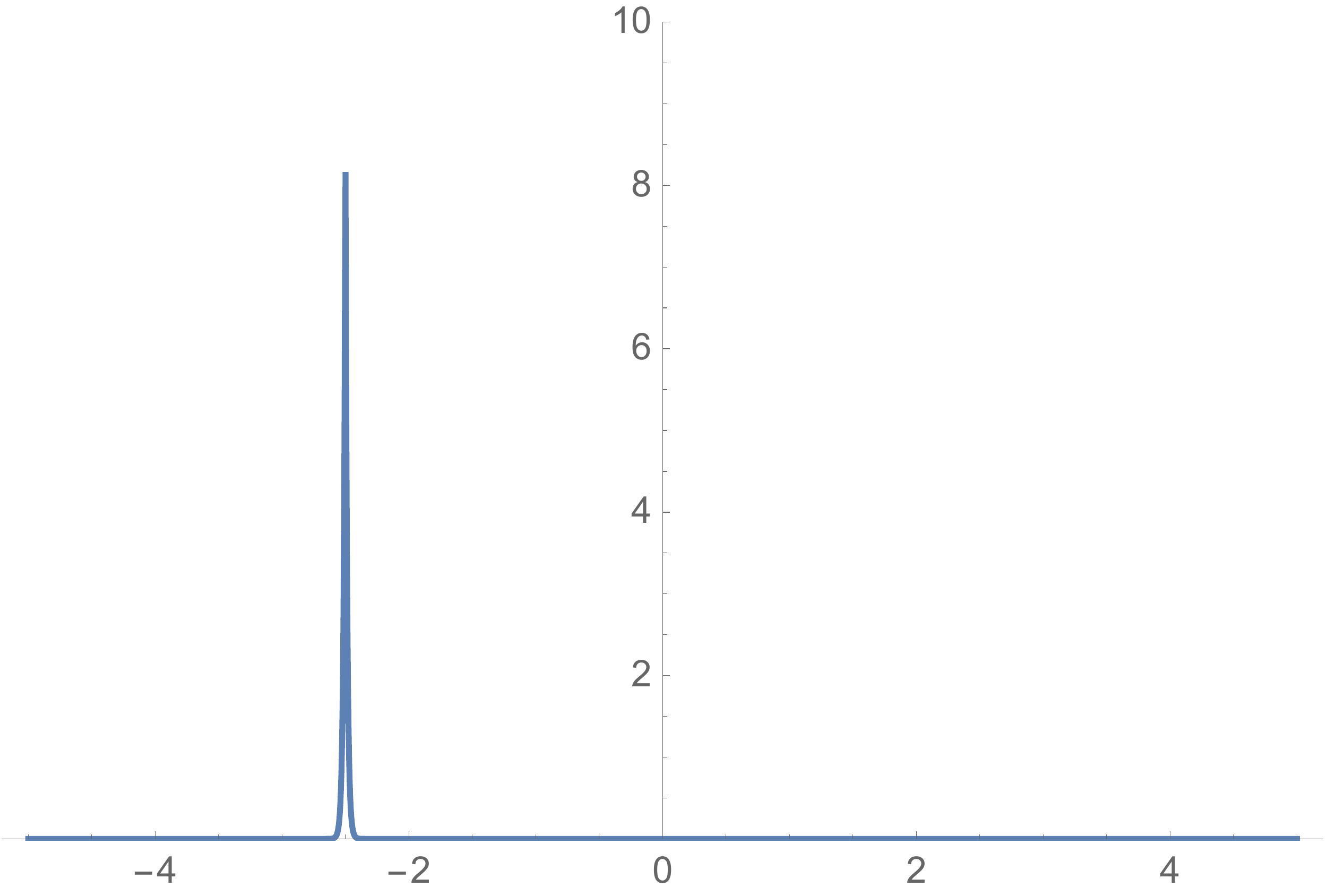}\hfill
\includegraphics[width=8cm]{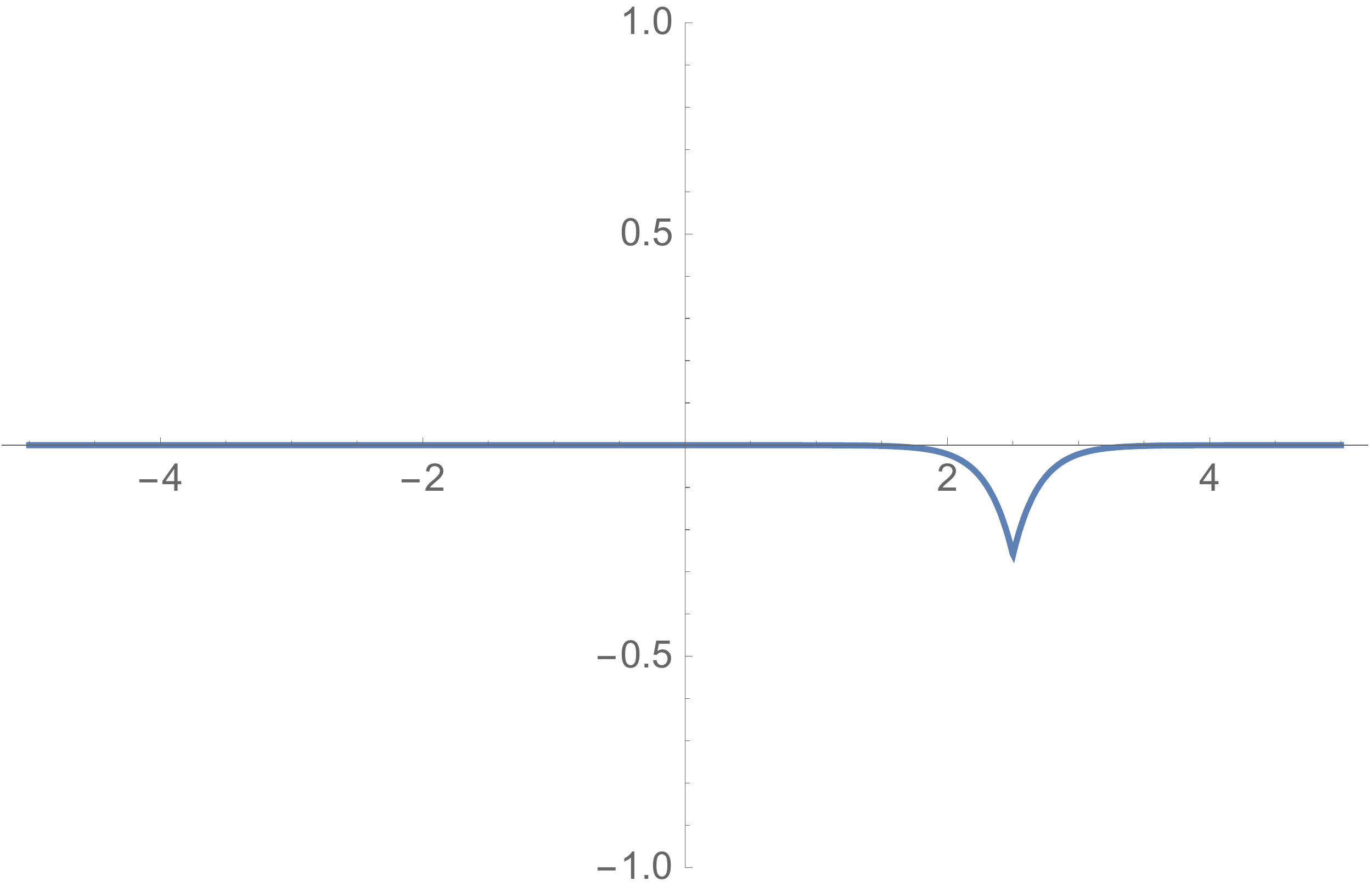}
\end{center}
\caption{\label{fond ecc} 
{\footnotesize Plot of $\phi_0(x)$ and $\phi_1(x)$ for $\gamma_{2}=-10$, $\gamma_{1}=-150$, $2a=5$.}}
\end{figure}
%On the other hand 
%\begin{equation}
%\frac{G^{\lambda_{0}}(2a)}{G^{\lambda_{1}} (2a)} = \frac{\xi_{1}}{\xi{0}}\,e^{- (\xi_{0} - \xi_{1})\,a}  <  e^{- (|\gamma_{1}| - |\gamma_{2}|)\,a} \alpha = e^{- |\gamma_{1}| ( 1 - \alpha)\,a} \alpha 
%\end{equation}
%which implies that the values of $|\phi_0(x)|$ around $x= +a$ are exponentially smaller than the values of $|\phi_1(x)|$ around $x = -a$ (see figure (\ref{fond ecc 2})). 
%\begin{figure}[H]
%\begin{center}
%\includegraphics[width=8cm]{final}
%\end{center}
%\caption{\label{fond ecc 2} 
%{\footnotesize Plot of cumulative probability of $\phi_0(x)$ (thick line) around $\displaystyle -\frac{a}{2}$ and $\phi_1(x)$ (dashed line) around $\frac{a}{2}$  as a function of $\gamma_{1}/\gamma_{2}$, $a=10$.}}
%\end{figure}
%%0.1%%

%\begin{figure}[H]
%\begin{center}
%\includegraphics[width=6cm]{asym_fond1}\hfill
%\includegraphics[width=6cm]{asym_ecc1}
%\end{center}
%\caption{\label{fond ecc} 
%{\footnotesize Plot of $\phi_0(x)$ and $\phi_1(x)$ for $\gamma_{2}=-1$, $\gamma_{1}=\gamma_{2}-\delta$, $a=10$ and $\delta=0.1$ .}}
%\end{figure}
\noindent
As a consequence the product $|\phi_0(x) \,\,\phi_1(x)|$ turns out to be small everywhere and any periodic cancellation in (\ref{dens}) becomes impossible. No beating phenomenon will occur in this cases, as will be shown by numerical computations.

\noindent
It should be expected that the asymmetry due to the non-linearity will produce a similar behavior on time scales depending on the initial condition and on the strength of the nonlinearity. 

%\begin{remark} Let us analyze the behavior of the difference of energy level in the semiclassical limit.
%In this case the equation for the eigenvalues reads
%\begin{equation}
%\left(1+\frac{2\sqrt{\lambda\hbar}}{\sqrt{2\,m}\gamma_{1}}\right)\left(1+\frac{2\sqrt{\lambda\hbar}}{\sqrt{2\,m}\gamma_{2}}\right)-e^{-4\frac{\sqrt{2m}}{\hbar}\sqrt{\lambda}\,a}=0.
%\end{equation}
%Using the following notation $\displaystyle \delta_{i}=\frac{2 m |\gamma_{i}| a}{\hbar^{2}} $ we get that the energy split is given by
%\begin{equation}
%\triangle E\simeq\frac{|\gamma_{1}|\,|\gamma_{2}|}{2}\left|e^{-\delta_{1}}-e^{-\delta_{2}}\right|+\frac{|\gamma_{1}|+|\gamma_{2}|}{4} \left(\sqrt{(\gamma_{1}-\gamma_{2})^{2}+4\gamma_{1}\gamma_{2}e^{-\delta_{1}}}+\sqrt{(\gamma_{1}-\gamma_{2})^{2}+4\gamma_{1}\gamma_{2}e^{-\delta_{2}}}\right)
%\end{equation}

%Let us observe that the semiclassical limit $\delta_{i}\gg 1$,  we get again that  $\triangle E \propto e^{-\delta}$ where $\delta=\min\{\delta_{1},\delta_{2}\}$.

%\end{remark}
%%%%%%%%%%%%%%%%%%%%%
\subsection{Nonlinear point interactions} \label{SEC23} 
%%%%%%%%%%%%%%%%%%%%
A detailed analytical study of the non-linear case $\sigma >0$ (which is no longer explicitly solvable) can be found in  (\cite{Adami:1999tk,Adami:2001bt}). The authors obtained general results about existence of solutions either local or global in time and proved existence of blow up solutions for $\sigma \geq 1$. In this section we briefly review the results that will be relevant for our work. 

\noindent
In Section \ref{SEC3}  we present the numerical simulation results for the evolution of a beating state, i.e., an initial state giving rise in the linear case to a beating motion of the particle, namely
\be \label{NL_IC} 
\psi_0 (x):=\alpha \,\phi_{f}(x)+ \beta \, \phi_{e}(x)\,, \quad \alpha,\beta \in \RR\,.
\ee
Our aim is to study how the nonlinearity influences the beating phenomenon. As we already mentioned we expect that even if the initial condition is almost-symmetric, the nonlinearity will have the effect of braking the symmetry. 

%\noindent
%The nonlinear point interaction can be introduced generalizing  the case of  point interactions with a time dependent strenght $\gamma_{j}(t)$.

%\noindent
%\noindent
Let us go back to the general problem (\ref{SCH}) with initial conditions 
\begin{equation}
 \psi(0,x)=\psi_0(x).
\end{equation}
which we will investigate in the integral form (\ref{VOLT}). 

\noindent From (\cite{Adami:2001bt}, Theorem 6) we know that, if $\sigma < 1$ and one chooses an initial data $\psi_0 \in H^{1}(\mathbb{R})$, then the Cauchy problem has a unique solution which is global in time. Moreover in  (\cite{Adami:2001bt}, Theorem 23) it is proved that if $\gamma<0$ and $\sigma\geqslant 1$ then there exist initial data such that the solutions of the Cauchy problem  blow-up in finite time.

\noindent
For convenience of the reader we re-write below the charge equation

%\noindent The nonlinear evolution problem can be analyzed in its integral form  
%\begin{equation}\label{duhamel}
%\psi(t,x)=(\mathcal{U}(t)\psi_{0})(x)-\imath\,\gamma\sum\int_{0}^{t}ds\, U(t-s:x-y_{j})|\psi(s,y_{j})|^{2\sigma_{j}}\psi(s,y_{j})
%\end{equation}
%\noindent where $\mathcal{U}(t)$ is the one dimensional free unitary group

%\noindent
%From (\ref{duhamel}) we get that the evolution is determined by these equations evaluated at $Y$ and the nonlinear system reads
\be \label{VOLT2}
\left\{
\begin{array}{l} 
\ds q_1(t)+{\gamma \over 2} \sqrt{{\imath} \over  \pi} \, \int_0^t  { q_1(s)\, |q_1(s)|^{2 \sigma} \over \sqrt{t-s}}\, ds + {\gamma \over 2} \sqrt{{\imath} \over  \pi} \, \int_0^t  { q_2(s)\, |q_2(s)|^{2 \sigma} \over \sqrt{t-s}} \, e^{{\imath} { a^2 \over {t-s}}}\, ds\\[3mm]
\ds \hspace{10cm} = ({\mathcal U}(t)\, \psi_{0})(-a)\,,\\[3mm]
\ds q_2(t)+{\gamma \over 2} \sqrt{{\imath} \over  \pi} \, \int_0^t  { q_2(s)\, |q_2(s)|^{2 \sigma} \over \sqrt{t-s}}\, ds + {\gamma \over 2} \sqrt{{\imath} \over  \pi} \, \int_0^t  { q_1(s)\, |q_1(s)|^{2 \sigma} \over \sqrt{t-s}} \, e^{{\imath} { a^2 \over {t-s}}}\, ds\\[3mm]
\ds \hspace{10cm} = ({\mathcal U}(t)\, \psi_{0})(a)\,.
\end{array}
\right.
\ee
Next section is devoted to test the effectiveness of the integral form  (\ref{VOLT2}) of the evolution equations to find numerical solutions of the Cauchy problem (\ref{SCH}). 
%with $\vartheta(t,x):=({\mathcal U}(t)\, \psi_0)(x)$ the solution of the free 1D Schr\"odinger equation
%\be \label{SCH1D}
%\left\{
%\begin{array}{l}
%{\imath} \hbar \partial_t \vartheta= - {\hbar^2 \over 2 m}\, \partial_{xx} \vartheta \,, \quad \forall (t,x) \in \RR^+ \times \RR\,,\\[3mm]
%\ds \vartheta(0,x)=\psi_0(x)\,, \quad \forall x \in \RR\,,
%\end{array}
%\right.
%\ee
%where the Planck constant is taken equal to one, {\it i.e.} $\hbar=1$, and the particle mass $m=1/2$.\\

%\noindent
%We are interested in the solution of this system when $\displaystyle \psi_{0}=\phi$.

%%%%%%%%%%%%%%%%%%%%%%%%%%%%%%%%%%%%%%%%%%%%%%%%%%%%%%%%%%%%%%%
%%%%%%%%%%%%%%%%%%%%%%%%%%%%%%%%%%%%%%%%%%%%%%%%%%%%%%%%%%%%%%

%%%%%%%%%%%%%%%%%%%%%
\section{The numerical discretization of the Volterra-system} \label{SEC3} 
%%%%%%%%%%%%%%%%%%%%%
Let us come now to the numerical part of this work, namely the discretization and later on simulation of the  Volterra-system \eqref{VOLT2}, in order to investigate the delicate phenomenon of beating. Linear (symmetric and asymmetric) as well as nonlinear cases will be treated, starting from an initial condition under one of the forms\
\be \label{ICC}
\psi^{L}_{beat,0}(x):=\alpha\,\phi_f(x)+\beta\, \phi_e(x)\,, \quad \psi_{asy,0}(x):= \alpha\,\phi_{0}(x)+\beta\, \phi_{1}(x) \,,
\ee
with some given constants $\alpha,\beta \in \RR$ and $(\phi_f,\phi_e)$ resp. $(\phi_0,\phi_1)$ defined in \eqref{PF}-\eqref{PE} resp. \eqref{eigenfunction0}-\eqref{eigenfunction1}.\\

The discretization of the Volterra-system \eqref{VOLT2} passes through the discretization of two different kind of integrals, an Abel-integral, which is of the form
\be \label{AB}
{\mathcal Ab}(t):=\int_0^t  {g(s) \over \sqrt{t-s}}\, ds\,,
\ee
and a highly-oscillating integral of the form
\be \label{HO}
{\mathcal Ho}(t):=\int_0^t  {g(s) \over \sqrt{t-s}} \, e^{{{\mathbf i}} { a^2 \over {t-s}}}\, ds\,.
\ee
Besides, the free Schr\"odinger equation 
\be \label{SCH1D}
\left\{
\begin{array}{l}
{\mathbf i} \frac{\partial}{\partial t} \vartheta= -  \frac{\partial^2}{\partial x^2} \vartheta \,, \quad \forall (t,x) \in \RR^+ \times \RR\,,\\[3mm]
\ds \vartheta(0,x)=\psi_{0}(x)\,, \quad \forall x \in \RR\,.
\end{array}
\right.
\ee
has to be solved to compute the right hand side of the Volterra system, {\it i.e.} $({\mathcal U}(t)\, \psi_{0})(\pm a)$, and one has also to take care of the non-linearity, which will be treated iteratively by means of a linearization. The treatment of all these four steps shall be presented in the following subsections.\\

For the numerics we shall consider the truncated time-space domain $[0,T] \times [-L_x,L_x]$ and impose periodic boundary conditions in space. We shall furthermore fix a homogeneous discretization of this domain, defined as
$$
\begin{array}{llll}
&0 = t_1 < \cdots < t_l < \cdots < t_K=T\,, &\,\,\,t_l:=(l-1) \Delta t\,, & \,\,\,\Delta t := T/(K-1);\\[3mm]
&-L_x = x_1 < \cdots < x_i < \cdots < x_N=L_x\,, &\,\,\, x_i:=-L_x+(i-1) \Delta x\,, &\,\,\,\Delta x := 2L_x/(N-1)\,.
\end{array}
$$
\medskip
%%%%%%%%%%%%%%%%%%%%%%
\subsection{The free Schr\"odinger evolution} \label{SEC_SCH}
%%%%%%%%%%%%%%%%%%%%%%
We shall present now two different resolutions of the Schr\"odinger equation \eqref{SCH1D}, a numerical resolution via the Fast Fourier Transform (fft,ifft) assuming periodic boundary conditions in space and an analytic, explicit resolution by means of the continuous Fourier Transform and based on the specific initial condition we choose.\\

\noindent The numerical resolution starts from the partial Fourier-Transform (in space) of \eqref{SCH1D}
$$
\left\{
\begin{array}{l}
\frac{d}{d t} \hat{\theta}_k(t)= - {{\mathbf i}}\, k^2 \hat{\theta}_k(t) \,, \quad \forall k \in \ZZ\,, \quad \forall t \in \RR^+\,,\\[3mm]
\ds \hat{\theta}_k(0)=\hat{\psi}_{0,k}\,, \quad \forall k \in \ZZ\,,
\end{array}
\right.
$$
where
$$
\hat{\psi}_{0,k}:= { 1 \over 2 L_x} \int_{-L_x}^{L_x} \psi_{0}(x) e^{-{{\mathbf i}} \omega\, x \, k}\, dx\,, \quad \omega:={ \pi \over L_x}\,,
$$
and hence
\be \label{Fr_io}
\hat{\theta}_k(t)=e^{-{{\mathbf i}} \,k^2\,t}\, \hat{\theta}_k(0)\,, \quad \forall (t,k)\in \RR^+ \times \ZZ\,.
\ee
Remark that we supposed here periodic boundary conditions in the truncated space-domain $[-L_x,L_x]$, where the appearance of the discrete Fourier-variable $k \in \ZZ$. Using the fft- as well as ifft-algorithms permits hence to get from \eqref{Fr_io} a numerical approximation of the solution $\vartheta(t,x)$ of the free Schr\"odinger equation \eqref{SCH1D}.\\

Analytically, we shall situate us in the whole space $\RR$ and shall perform the same steps explicitly, taking advantage of the initial condition, which has the form
\be \label{IC}
\psi_{0}(x):=\alpha\, \phi_f(x) + \beta\, \phi_e(x)\,, \quad \forall x \in \RR\,, \quad \alpha,\beta \in \RR\,,
\ee
where we recall that (see \eqref{GG}, \eqref{PF}, \eqref{PE})
$$
\phi_f(x)=N_f \left[ G^{\lambda_f} (x+a) + G^{\lambda_f} (x-a)\right]\,, \quad \phi_e(x)=N_e \left[ G^{\lambda_e}  (x+a) - G^{\lambda_e}  (x-a)\right]\,.
$$
Thus one has with the definition of the Fourier-transform and its inverse
$$
\hat{\phi}(\nu):= {1 \over \sqrt{2 \pi}} \int_{-\infty}^\infty \phi(x)\, e^{-{{\mathbf i}}\, x\, \nu} dx\,, \quad \phi(x)={1 \over \sqrt{2 \pi}} \int_{-\infty}^\infty \hat{\phi}(\nu) \, e^{{{\mathbf i}}\, x\, \nu} d\nu\,,
$$
that
$$
\hat{\psi}_0(\nu)=\alpha \, \hat{\phi}_f(\nu)+ \beta\, \hat{\phi}_e(\nu) \quad \Rightarrow \quad \hat{\vartheta}(t,\nu)=\alpha \, \hat{\phi}_f(\nu)\, e^{-{{\mathbf i}}\, \nu^2\, t}+ \beta\, \hat{\phi}_e(\nu)\,e^{-{{\mathbf i}}\, \nu^2\, t} \quad (t,\nu) \in \RR^+ \times \RR\,.
$$
Let us now compute explicitly the Fourier transform of $\phi_{0}$ and $\phi_1$ and finally the inverse Fourier transform of $\hat{\vartheta}(t,\nu)$. For this, remark that one has
$$
\hat{G}^\lambda(\nu)={1 \over \sqrt{2\, \pi}}\, { 1 \over \lambda+ \nu^2}\,, \quad \forall \nu \in \RR\,,
$$
leading to
$$
\hat{\phi}_f(\nu)={2 N_f \over \sqrt{2\, \pi}}\, { \cos(\nu\, a) \over \lambda_f+ \nu^2}\,, \quad \hat{\phi}_e(\nu)=- {2 \,{{\mathbf i}}\, N_e \over \sqrt{2\, \pi}} \,{ \sin(\nu\, a) \over \lambda_e+ \nu^2}\,.
$$
Now, in the aim to resolve numerically the Volterra-system \eqref{VOLT2}, one needs only to compute the solution of \eqref{SCH1D} in the points $y_{1,2}=\pm a$, which means
$$
\vartheta(t,-a)={\alpha\,N_f \over 2 \pi} \int_{-\infty}^\infty {1+e^{-2\,{{\mathbf i}}\, a\, \nu } \over \lambda_f+\nu^2}\,e^{-{{\mathbf i}}\, \nu^2\, t} \, d\nu + {\beta\,N_e \over 2 \pi} \int_{-\infty}^\infty {1-e^{-2\,{{\mathbf i}}\, a\, \nu } \over \lambda_e+\nu^2}\,e^{-{{\mathbf i}}\, \nu^2\, t} \, d\nu\,.
$$
$$
\vartheta(t,a)={\alpha\,N_f \over 2 \pi} \int_{-\infty}^\infty {1+e^{2\,{{\mathbf i}}\, a\, \nu } \over \lambda_f+\nu^2}\,e^{-{{\mathbf i}}\, \nu^2\, t} \, d\nu - {\beta\,N_e \over 2 \pi} \int_{-\infty}^\infty {1-e^{2\,{{\mathbf i}}\, a\, \nu } \over \lambda_e+\nu^2}\,e^{-{{\mathbf i}}\, \nu^2\, t} \, d\nu\,,
$$
To compute these two integrals, we shall take advantage of the following two formulae
$$
I^\lambda_A:=\int_{-\infty}^{\infty} {1 \over \lambda+\nu^2}\,e^{-{{\mathbf i}}\, \nu^2\, t} \, d\nu={\pi \over \sqrt{\lambda}}\, e^{{{\mathbf i}}\, \lambda\,t} \left[1- \textrm{erf}(\sqrt{{{\mathbf i}}\, \lambda \, t}) \right]\,,
$$
$$
I^\lambda_B:=\int_{-\infty}^{\infty} { e^{\pm 2\,{{\mathbf i}}\, a\, \nu }\over \lambda+\nu^2}\,e^{-{{\mathbf i}}\, \nu^2\, t} \, d\nu=\int_{-\infty}^{\infty} { \cos(2\, a\,\nu )\over \lambda+\nu^2}\,e^{-{{\mathbf i}}\, \nu^2\, t} \, d\nu\,,
$$
where $\textrm{erf}(\cdot)$ is the so-called error-function, defined by
$$
\textrm{erf}(x):={2 \over \sqrt{\pi}} \int_0^x e^{-t^2}\, dt\,.
$$
After some straightforward computations, one gets
$$
I^\lambda_B={\pi \over 2\, \sqrt{\lambda}}\, e^{{{\mathbf i}}\, \lambda\,t}\, \left\{  e^{2\, \sqrt{\lambda}\,a}\, \left[ 1-\textrm{erf}\left( \sqrt{{{\mathbf i}} \, \lambda\, t} + { a \over \sqrt{{{\mathbf i}} \, t}}\right)\right] + e^{-2\, \sqrt{\lambda}\,a}\,\left[ 1-\textrm{erf}\left( \sqrt{{{\mathbf i}} \, \lambda\, t} - { a \over \sqrt{{{\mathbf i}} \, t}}\right)\right]\right\}\,.
$$
With the two expressions $I^\lambda_A$ and $I^\lambda_B$ one has now
$$
({\mathcal U}(t)\, \psi_{0})(-a)=\vartheta(t,-a)={\alpha\,N_f \over 2 \pi} \, \left[ I_A^{\lambda_f}+I_B^{\lambda_f}\right] -{\beta\,N_e \over 2 \pi}\, \left[ I_A^{\lambda_e}+I_B^{\lambda_e}\right]\,,
$$
$$
({\mathcal U}(t)\, \psi_{0})(a)=\vartheta(t,a)={\alpha\,N_f \over 2 \pi} \, \left[ I_A^{\lambda_f}+I_B^{\lambda_f}\right] +{\beta\,N_e \over 2 \pi}\, \left[ I_A^{\lambda_e}+I_B^{\lambda_e}\right]\,,
$$
which permits to have the right-hand side of the Volterra-system \eqref{VOLT2} analytically.\\
Let us observe that the same computations hold also for the asymmetric initial condition \eqref{AsyINIT} with $(\phi_f,\phi_e)$ replaced by $(\phi_0,\phi_1)$, as well as $(N_f,N_e)$ by $(N_0,N_1)$.
%%%%%%%%%%%%%%%%%%%%%%
\subsection{The Abel integral} \label{SEC_ABEL}
%%%%%%%%%%%%%%%%%%%%%%
Let us now present a discretization of an Abel-integral of the form \eqref{AB}, based on a Gaussian quadrature. The time interval $[0,T]$ is discretized in a homogeneous manner, as proposed above, such that one can now approximate ${\mathcal Ab}(t_l)$ for $l=1,\cdots ,K$ as follows
$$
{\mathcal Ab}(t_l)=\int_0^{t_l} {g(s) \over \sqrt{t_l-s}}\, ds= \sum_{k=1}^{l-1} \int_{t_k}^{t_{k+1}} {g(s) \over \sqrt{t_l-s}}\, ds= \sum_{k=1}^{l-1} \sqrt{\Delta t} \int_{0}^{1} {g(t_k+ \xi \Delta t) \over \sqrt{l-k-\xi}}\, d\xi\,.
$$
Now, introducing the notation
$$
r_k^{(l)} := l-k\,, \quad \varphi_k(\xi):= g(t_k+ \xi \Delta t)\,, \quad p_k^{(l)}(\xi):= { 1 \over \sqrt{r_k^{(l)}-\xi}}\,,
$$
we will use a Gaussian quadrature formula with one point and the weight-function $p_k^{(l)}(\xi)$ to approximate the last integral as follows
$$
\int_0^1 p_k^{(l)}(\xi) \, \varphi_k(\xi)\, d\xi = w_k^{(l)}\, \varphi_k(\xi_k^{(l)})\,,
$$
with the ``Gauss-points'' given by
\be \label{GPS}
w_k^{(l)}:= \int_0^1 { 1 \over \sqrt{r_k^{(l)}-\eta}}\, d\eta\,, \quad \xi_k^{(l)}:= { 1 \over w_k^{(l)}}\, \int_0^1 { \eta \over \sqrt{r_k^{(l)}-\eta}}\, d\eta\,.
\ee
This leads to
$$
{\mathcal Ab}(t_l)\approx \sqrt{\Delta t}\, \sum_{k=1}^{l-1}  w_k^{(l)}\, g(t_k+\Delta t\, \xi_k^{(l)})\,.
$$
As the function $g$ is known only at the grid points $t_k$, we shall linearize $g$ in the cell $[t_k,t_{k+1}]$ to find finally the approximation formula we used for the Abel-integral
\be \label{AB_D}
{\mathcal Ab}(t_l)\approx {\mathcal Ab}_{num}(t_l):= \sqrt{\Delta t}\,\sum_{k=1}^{l-1}  w_k^{(l)}\, \left[ \xi_k^{(l)}\, g_{k+1} +(1-\xi_k^{(l)})\, g_k \right]\,, \quad \forall l=1, \cdots,K\,,
\ee
where $w_k^{(l)}$ and $\xi_k^{(l)}$ are given by \eqref{GPS} and $g_k:=g(t_k)$.
Let us remark here that the function $g(s)$ is known up to the instant $t_{l-1}$, such that we have to keep in mind that there is a term in this last formula, which is unknown, {\it i.e.} $g_l$, and which has to be computed at this present step via the Volterra-system. This procedure shall be explained in subsection \ref{SEC_NON}, however let us here introduce some notation, to simplify the subsequent analysis. We shall denote
$$
{\mathcal Ab}_{num}^1(t_l):= \sqrt{\Delta t}\,\sum_{k=1}^{l-2}  w_k^{(l)}\, \left[ \xi_k^{(l)}\, g_{k+1} +(1-\xi_k^{(l)})\, g_k \right] + \sqrt{\Delta t}\,w_{l-1}^{(l)}\,(1-\xi_{l-1}^{(l)})\, g_{l-1} \,, \quad \forall l=1, \cdots,K\,,
$$
and
$$
{\mathcal Ab}_{num}^2(t_l):= \sqrt{\Delta t}\, w_{l-1}^{(l)}\,  \xi_{l-1}^{(l)}\, g_{l} \,,
$$
such that \eqref{AB_D} becomes simply
\be \label{AB_D_bis}
{\mathcal Ab}(t_l)\approx {\mathcal Ab}_{num}(t_l)= {\mathcal Ab}_{num}^1(t_l) +\sqrt{\Delta t}\, w_{l-1}^{(l)}\,  \xi_{l-1}^{(l)}\, g_{l} \,, \quad \forall l=1, \cdots,K\,.
\ee
%%%%%%%%%%%%%%%%%%%%%%
\subsection{The Highly-oscillating integral} \label{SEC_HOSC}
%%%%%%%%%%%%%%%%%%%%%%
Let us come now to the treatment of the highly oscillatory integral \eqref{HO}, which is the most delicate part of our numerical scheme. Indeed, as one can observe from Fig. \ref{IMA_INT}, the integrand function (here with $g \equiv 1$, $t=a=1$) is a rapidly varying function such that its integration has to be done with care. 
\begin{figure}[htbp]
\begin{center}
\includegraphics[width=7cm]{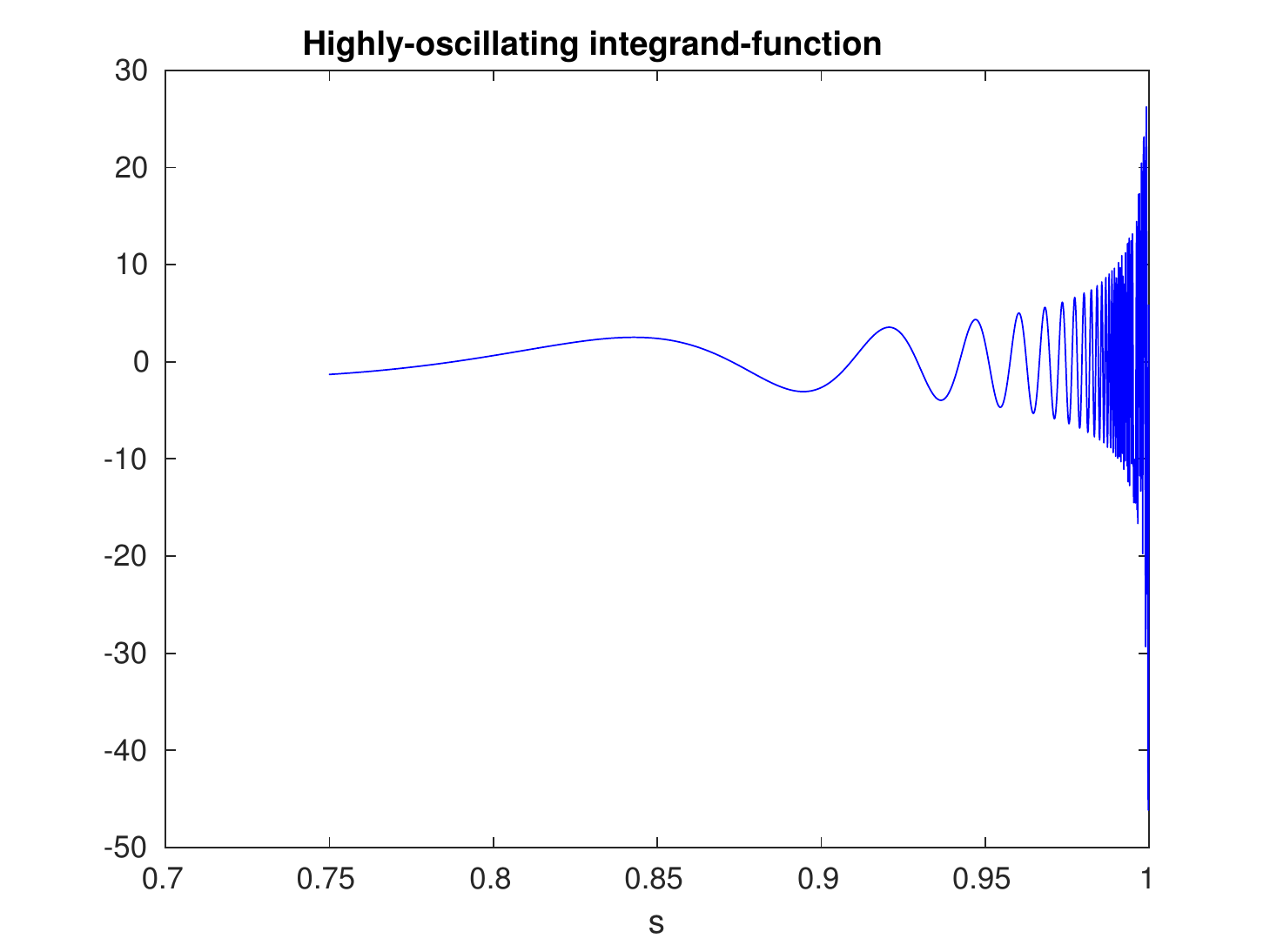}
\end{center}
\caption{\label{IMA_INT} 
{\footnotesize Evolution in time of the integrand function $h(s):={1 \over \sqrt{1-s}} \, e^{{{\mathbf i}} { 1 \over {1-s}}}$ with $s \in [0,1]$.}}
\end{figure}
We shall present here different procedures for its computation or approximation. The first procedure is more analytical and based on integral-tables \cite{Abramo}. The second one is a numerical approach and uses integration-by-parts to cope with the high oscillations.\\

The analytic procedure starts with linearizing the function $g$ in the cell $[t_k,t_{k+1}]$, 
$$
g(s)={ g_{k+1}-g_k \over \Delta t}\, (s-t_k) + g_k\,, \quad \forall s \in [t_k,t_{k+1}]\,,
$$
in order to approximate
$$
\begin{array}{lll}
\ds {\mathcal Ho}(t_l)&:=&\ds \int_0^{t_l}  {g(s) \over \sqrt{t_l-s}} \, e^{{{\mathbf i}} { a^2 \over {t_l-s}}}\, ds\\[3mm]
&\approx&\ds \sum_{k=1}^{l-1}  \left[ { g_{k+1}-g_k \over \Delta t} \int_{t_k}^{t_{k+1}} {s-t_k \over \sqrt{t_l-s}} \, e^{{{\mathbf i}} { a^2 \over {t_l-s}}}\, ds +  g_k \int_{t_k}^{t_{k+1}} {1 \over \sqrt{t_l-s}} \, e^{{{\mathbf i}} { a^2 \over {t_l-s}}}\, ds\right]\\[3mm]
&=&\ds \sum_{k=1}^{l-1} \left[ { g_{k+1}-g_k \over \Delta t}\, I^{k,l}_1 +  g_k \,I^{k,l}_2\right]\,, \quad \forall l =1, \cdots,K\,,
\end{array}
$$ 
where we denoted
$$
I^{k,l}_1:=\int_{t_k}^{t_{k+1}} {s-t_k \over \sqrt{t_l-s}} \, e^{{{\mathbf i}} { a^2 \over {t_l-s}}}\, ds\,, \qquad I^{k,l}_2:=\int_{t_k}^{t_{k+1}} {1 \over \sqrt{t_l-s}} \, e^{{{\mathbf i}} { a^2 \over {t_l-s}}}\, ds\,.
$$
The $I^{k,l}_1$-integral can be further developped as follows
$$
I^{k,l}_1=\int_{t_k}^{t_{k+1}} {s-t_l \over \sqrt{t_l-s}} \, e^{{{\mathbf i}} { a^2 \over {t_l-s}}}\, ds+ (t_l-t_k)\int_{t_k}^{t_{k+1}} {1\over \sqrt{t_l-s}} \, e^{{{\mathbf i}} { a^2 \over {t_l-s}}}\, ds= - I^{k,l}_3 + (t_l-t_k)\, I^{k,l}_2\,,
$$
where we introduced
$$
I^{k,l}_3:=\int_{t_k}^{t_{k+1}} \sqrt{t_l-s} \,\, e^{{{\mathbf i}} { a^2 \over {t_l-s}}}\, ds\,.
$$
The integrals $I^{k,l}_2$ and $I^{k,l}_3$ have now explicit expressions. Indeed, one can find, using \cite{Abramo}, that
\be \label{I2}
I^{k,l}_2= 2 \int_{D_{k+1}}^{D_k} e^{{{\mathbf i}}\, a^2/\xi^2}\, d\xi=2 \, \left[ \sqrt{-{{\mathbf i}}\, \pi}\, a\, \textrm{erf} \left( {\sqrt{-{{\mathbf i}}}\, a \over \xi} \right) \right]_{D_{k+1}}^{D_{k}}\,, \quad D_k:=\sqrt{t_l-t_{k}}\,;
\ee
and
\be \label{I3}
I^{k,l}_3=2 \int_{D_{k+1}}^{D_{k}} \xi^2\, e^{{{\mathbf i}}\, a^2/\xi^2}\, d\xi={2\over 3} \, \left[\xi^3   \right]_{D_{k+1}}^{D_{k}} + 4 \sqrt{ {{\mathbf i}} \pi}\, (T_{D_{k}}-T_{D_{k+1}})\,,
\ee
with
$$
T_{D}=\left[ {\xi^3 \over 3} \, \textrm{erf} \left( {\sqrt{-{{\mathbf i}}}\over D}\, \xi \right) + e^{\frac{{\mathbf i}}{D^2}\, \xi^2 }\, \frac{D}{3 \sqrt{{-{\mathbf i}}\, \pi}} \left( \xi^2 + {{\mathbf i}}\,  D^2\right) \right]_0^a\,, \quad \textrm{for}\,\,\, D=D_k\,, D_{k+1}\,.
$$
Using now these explicit formulae \eqref{I2}-\eqref{I3} we get an approximate formula for the highly-oscillating intergral ${\mathcal Ho}$, {\it i.e.}
\be \label{HO1}
{\mathcal Ho}(t_l) \approx \sum_{k=1}^{l-1} \left[ { g_{k+1}-g_k \over \Delta t}\, (-I^{k,l}_3 + (t_l-t_k) I^{k,l}_2) +  g_k \,I^{k,l}_2\right]\,, \quad \forall l =1, \cdots,K\,,
\ee
This formula is quasi-analytical, and is based on the linearization of the function $g$. This linearization is possible, if the function $g$ itself is not highly-oscillating. We remark here also that \eqref{HO1} involves the still unknown value $g_l$.\\

A second idea can be used to approximate these highly oscillating integrals, based more on a numerical discretization. Let us start from
$$
\begin{array}{lll}
\ds {\mathcal Ho}(t_l)&=&\ds \int_0^{t_l}  {g(s) \over \sqrt{t_l-s}} \, e^{{{\mathbf i}} { a^2 \over {t_l-s}}}\, ds\\[3mm]
&=&\ds \sum_{k=1}^{N^{(l)}_{it}-1} \int_{t_k}^{t_{k+1}}  {g(s) \over \sqrt{t_l-s}} \, e^{{{\mathbf i}} { a^2 \over {t_l-s}}}\, ds+ \sum_{k=N^{(l)}_{it}}^{l-2} \int_{t_k}^{t_{k+1}}  {g(s) \over \sqrt{t_l-s}} \, e^{{{\mathbf i}} { a^2 \over {t_l-s}}}\, ds+\int_{t_{l-1}}^{t_{l}}  {g(s) \over \sqrt{t_l-s}} \, e^{{{\mathbf i}} { a^2 \over {t_l-s}}}\, ds\\[3mm]
& =:& \ds I_{H1}+I_{H2}+I_{H3}\,, \quad \forall l=1, \cdots,K\,.
\end{array}
$$
This decomposition follows the evolution of the integrand-function, meaning that the index $N^{(l)}_{it} \in [1,l-1] \subset \NN$ will delimitate the regions of smooth evolution resp. rapid variation and shall permit two different treatments of the integrals. This index is chosen in our case in the following manner:\\
The highly oscillating function $e^{{{\mathbf i}}{a^2 \over t_l-s}}$ has a period which diminishes monotonically as $s \rightarrow t_l$. The extrema of this function are localized at the points $s_j:=t_l-{a^2 \over j\, \pi}$, $j \in \NN$ and $s_j \rightarrow_{j \rightarrow \infty} t_l$. A function is smooth in our sens, if between two extrema we have at least $5$ time-steps. Hence, letting $J \in \NN$ being the index, such that $5 \star  \Delta t \sim s_{J+1}-s_J={a^2 \over \pi} \left[ \frac{1}{J}-\frac{1}{J+1} \right] \sim { a^2 \over \pi\, J^2}$, we define $N^{(l)}_{it}$ such that $t_{N^{(l)}_{it}} < s_J < t_{N^{(l)}_{it}+1}$.\\
Now for $k<N^{(l)}_{it}$ the integrand function is not so oscillating, and a standard quadrature-method (for example rectangle or trapez-method) can be used to approximate $I_{H1}$. In particular, using the trapez-method leads to 
$$
I_{H1}\approx \sum_{k=1}^{N^{(l)}_{it}-1} {\Delta t \over 2} \left[ {g_{k+1} \over \sqrt{t_l-t_{k+1}}} \, e^{{{\mathbf i}} { a^2 \over {t_l-t_{k+1}}}} +{g_{k} \over \sqrt{t_l-t_{k}}} \, e^{{{\mathbf i}} { a^2 \over {t_l-t_{k}}}} \right]\,.
$$
For $k\ge N^{(l)}_{it}$ the integrand function is becoming too oscillating to use any more standard quadrature methods, such that we shall rather make use of an integration-by-parts (IPP) technique, {\it i.e.}
$$
\begin{array}{lll}
\ds \int_{t_k}^{t_{k+1}}  {g(s) \over \sqrt{t_l-s}} \, e^{{{\mathbf i}} { a^2 \over {t_l-s}}}\, ds &=& \ds \left( - {{{\mathbf i}} \over a^2} \right) \, \int_{t_k}^{t_{k+1}}  g(s)\, (t_l-s)^{3/2} \,\left(  {{{\mathbf i}} a^2 \over (t_l-s)^2 }\, e^{{{\mathbf i}} { a^2 \over {t_l-s}}} \right)\, ds\\[4mm]
& = &\ds {{{\mathbf i}} \over a^2}  \, \int_{t_k}^{t_{k+1}} \left[ g(s)\, (t_l-s)^{3/2}\right]'\, e^{{{\mathbf i}} { a^2 \over {t_l-s}}}\, ds -{{{\mathbf i}} \over a^2} \left[g(s)\, (t_l-s)^{3/2}\, e^{{{\mathbf i}} { a^2 \over {t_l-s}}} \right]_{t_k}^{t_{k+1}}\\[4mm]
&\approx&\ds \left( - {{{\mathbf i}} \over a^2} \right) \, \left[ g_{k+1}\, (t_l-t_{k+1})^{3/2}\, e^{{{\mathbf i}} { a^2 \over {t_l-t_{k+1}}}}-g_{k}\, (t_l-t_{k})^{3/2}\, e^{{{\mathbf i}} { a^2 \over {t_l-t_{k}}}} \right]\,;\\[4mm]
\ds \int_{t_{l-1}}^{t_l}  {g(s) \over \sqrt{t_l-s}} \, e^{{{\mathbf i}} { a^2 \over {t_l-s}}}\, ds &\approx& \ds {{{\mathbf i}} \over a^2}\, g_{l-1}\, (\Delta t)^{3/2}\, e^{{{\mathbf i}}  a^2/ \Delta t}\,.
\end{array}
$$
Using these formulae, and remarking the telescopic summation, one gets immediately
$$
I_{H2}+I_{H3}\approx{{{\mathbf i}} \over a^2} \,g_{N^{(l)}_{it}}\, (t_l-t_{N^{(l)}_{it}})^{3/2}\, e^{{{\mathbf i}} { a^2 \over {t_l-t_{N^{(l)}_{it}}}}}\,.
$$
Hence, we get altogether
\be \label{HO_bis}
{\mathcal Ho}_{num}(t_l)=\sum_{k=1}^{N^{(l)}_{it}-1} {\Delta t \over 2} \left[ {g_{k+1} \over \sqrt{t_l-t_{k+1}}} \, e^{{{\mathbf i}} { a^2 \over {t_l-t_{k+1}}}} +{g_{k} \over \sqrt{t_l-t_{k}}} \, e^{{{\mathbf i}} { a^2 \over {t_l-t_{k}}}} \right]+{{{\mathbf i}} \over a^2} \,g_{N^{(l)}_{it}}\, (t_l-t_{N^{(l)}_{it}})^{3/2}\, e^{{{\mathbf i}} { a^2 \over {t_l-t_{N^{(l)}_{it}}}}}\,.
\ee
Remark at this point that in this case, we do not need $g_l$ for the computation of ${\mathcal Ho}(t_l)$.
%%%%%%%%%%%%%%%%%%%%%%
\subsection{The Non-linearity} \label{SEC_NON}
%%%%%%%%%%%%%%%%%%%%%%
The non-linearity is treated iteratively, by linearizing the non-linear term. 
To explain this procedure, let us first summarize what we performed up to now in the discretization of the Volterra-system \eqref{VOLT2}. Denoting for simplicity the constant $\kappa:={\gamma \over 2} \sqrt{{{\mathbf i}} \over  \pi}$ and using the approximations \eqref{AB_D_bis} as well as \eqref{HO_bis}, we have for $l=1, \cdots,K$
\be \label{VOLT_BIS}
\left\{
\begin{array}{l}
\ds q_1^l+\kappa \,\sqrt{\Delta t}\, w_{l-1}^{(l)}\,  \xi_{l-1}^{(l)}\, q_1^l\, |q_1^l|^{2 \sigma}+\kappa \, {\mathcal Ab}_{num}^{1,-}(t_l)  + \kappa \, {\mathcal Ho}_{num}^+(t_l)= \vartheta(t_l,-a)\\[3mm]
\ds q_2^l+\kappa \,\sqrt{\Delta t}\, w_{l-1}^{(l)}\,  \xi_{l-1}^{(l)}\, q_2^l\, |q_2^l|^{2 \sigma}+\kappa \, {\mathcal Ab}_{num}^{1,+}(t_l)  + \kappa \, {\mathcal Ho}_{num}^-(t_l)= \vartheta(t_l,a)\,,

\end{array}
\right.
\ee
where in the Abel and highly-oscillating terms we introduced a sign $\pm$ in order to underline which function $g(s)=q_{1,2}(s)\, |q_{1,2}(s)|^{2 \sigma}$ they involve, in particular the one corresponding to $y_1=-a$ or to $y_2=+a$.\\
The resolution of the non-linear system \eqref{VOLT_BIS} consists now in introducing the linearization-sequence $\{q_{1,2}^{l,n}\}_{n \in \NN}$ as follows
$$
q_{1,2}^{l,0}:=q_{1,2}^{l-1}\,,
$$
and where the terms $q_{1,2}^{l,n}$ are solution for $n\ge 1$ of the linearized Volterra-system
\be \label{VOLT_LIN}
\left\{
\begin{array}{l}
\ds q_1^{l,n}+\kappa \,\sqrt{\Delta t}\, w_{l-1}^{(l)}\,  \xi_{l-1}^{(l)}\, q_1^{l,n}\, |q_1^{l,n-1}|^{2 \sigma}+\kappa \, {\mathcal Ab}_{num}^{1,-}(t_l)  + \kappa \, {\mathcal Ho}_{num}^+(t_l)= \vartheta(t_l,-a)\\[3mm]
\ds q_2^{l,n}+\kappa \,\sqrt{\Delta t}\, w_{l-1}^{(l)}\,  \xi_{l-1}^{(l)}\, q_2^{l,n}\, |q_2^{l,n-1}|^{2 \sigma}+\kappa \, {\mathcal Ab}_{num}^{1,+}(t_l)  + \kappa \, {\mathcal Ho}_{num}^-(t_l)= \vartheta(t_l,a)\,.
\end{array}
\right.
\ee
This procedure is stopped at $k=M$, either when two subsequent iterations do not vary any more, meaning $|q_{1,2}^{l,n}-q_{1,2}^{l,n-1}| < 10^{-3}$, or when a maximal number of interations, as $k=10$ is reached, and one defines finally
$$
q_{1,2}^{l}:=q_{1,2}^{l,M}\,.
$$
Solving now the system \eqref{VOLT_LIN} permits us to get a numerical approximation of the solution to the Volterra-system \eqref{VOLT2} and in the next section we shall present the simulations based on the just presented scheme.
%%%%%%%%%%%%%%%%%%%%%
\section{Numerical simulation of the beating phenomenon} \label{CH4_SEC4} 
%%%%%%%%%%%%%%%%%%%%%
Let us present in this section the numerical results obtained with the scheme presented in Section \ref{SEC3}. First we shall investigate the symmetric and asymmetric linear case and compare the obtained results with the exact solutions in order to validate the code. A particular attention is paid to the asymmetric linear case, which does not allow for a beating motion of the particle, the initial symmetry being destroyed. Secondly we shall pass to the non-linear simulations and study the destruction of the beating due to the manifestation of the non-linearity.
%%%%%%%%%%%%%%%%%%%%%%
\subsection{The symmetric linear case} \label{LIN}
%%%%%%%%%%%%%%%%%%%%%%
In this section we set $\sigma=0$ and consider the linear Volterra-system \eqref{VOLT2}-\eqref{SCH1D} associated with the initial condition given in \eqref{IC}, namely
\be \label{ICB}
\psi^{L}_{beat,0}(x):=\alpha\,\phi_f(x)+\beta\, \phi_e(x)\,,  \quad \alpha,\beta\in \RR\,,
\ee
and corresponding exact solution
\be \label{CARICHE}
\left\{
\begin{array}{l}
q_1(t)=\alpha\, \phi_f(-a)\, e^{{{\mathbf i}}\,\lambda_f\,t} +\beta\, \phi_e(-a)\, e^{{{\mathbf i}}\,\lambda_e\,t}\,, \\[3mm]
q_2(t)=\alpha\, \phi_f(a)\, e^{{{\mathbf i}}\,\lambda_f\,t} +\beta\, \phi_e(a)\, e^{{{\mathbf i}}\,\lambda_e\,t}\,, 
\end{array}
\right.
\quad \forall t \in \RR^+\,.
\ee
In the following linear tests, we performed the simulations with the parameters
\be \label{PARA}
a=3\,, \quad \alpha=\sqrt{0.01}\,, \,\,\, \beta=\sqrt{0.99}\,, \quad \gamma=-0.5\,.
\ee

\noindent Figure \ref{IMA0} presents on the left the time-evolution of the numerical solutions of the Volterra-system \eqref{VOLT2}-\eqref{SCH1D}, associated to the parameters presented above, and on the right the relative error between the exact solution and the numerical solution. One can firstly observe the so-called beating motion of the system between the two ``stable'' configurations, which correspond to the first two energy states of the nitrogen atom.\\ 
Secondly, one remarks also a nice overlap of the numerical with the exact solutions. This overlap begins to deteriorate in time, effect which comes from the accumulation of the numerical errors, arising during the approximations we perform in the simulation. These linear tests permitted us to validate the linear version of our code. 
\begin{figure}[htbp]
\begin{center}
\includegraphics[width=7cm]{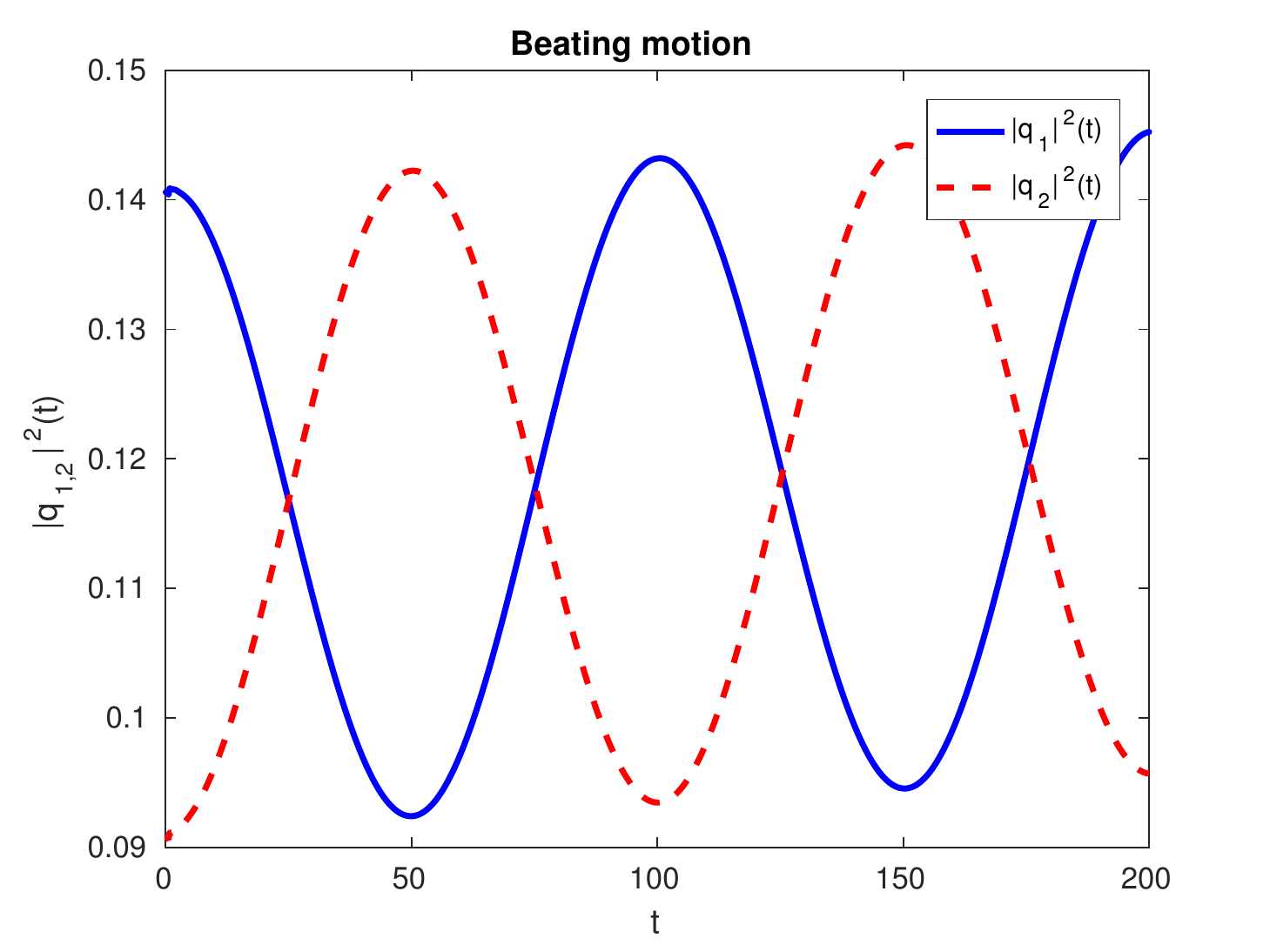}\hfill
\includegraphics[width=7cm]{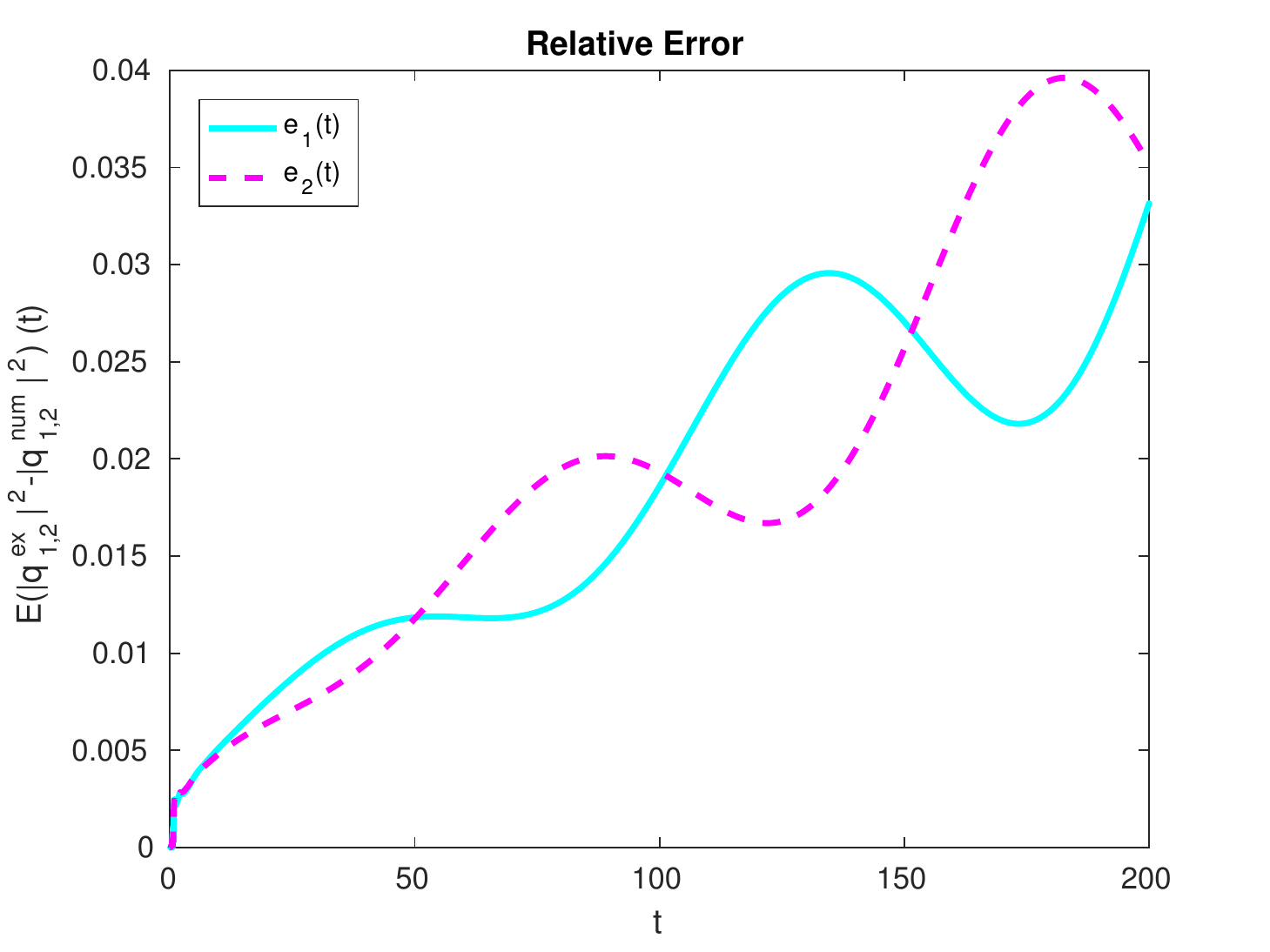}
\end{center}
\caption{\label{IMA0} 
{\footnotesize The beating effect. Left: Evolution in time of the numerical solutions $|q_1|^2(t)$ resp. $|q_2|^2(t)$. Right: Relative error: $ \textrm{abs}\left[|q_{1,2}^{ex}|^2-|q_{1,2}^{num}|^2 \right] / ||q_{1,2}^{ex}||_\infty$.}}
\end{figure}

%%%%%%%%%%%%%%%%%%%%%%
\subsection{The asymmetric linear case} \label{ASY}
%%%%%%%%%%%%%%%%%%%%%%
In contrast to the previous case, we shall now choose an asymmetric initial condition of the form
$$
\psi_{asy,0}(x):= \alpha\,\phi_{0}(x)+\beta\, \phi_{1}(x) \,, \quad \alpha,\beta \in \RR\,,
$$
with $\phi_0$, $\phi_1$ defined in \eqref{eigenfunction0}-\eqref{eigenfunction1}. The exact solution is equally known in this case, and is given by similar a formula as \eqref{CARICHE} (see \eqref{EX_ASY}). 
Two plots are presented in Fig. \ref{IMA_ASY}, corresponding to the two sets of parameters:
\begin{itemize}
\item (A) $\gamma_1=-8$,\,\, $\gamma_2=-4$,\,\, $a=1/2$,\,\, $\alpha=\beta=1/ \sqrt{2}$\,;
\item (B) $\gamma_1=-10$,\,\, $\gamma_2=-4$,\,\, $a=5$,\,\, $\alpha=\beta=1/ \sqrt{2}$\,.
\end{itemize}
As expected, the beating motion of the nitrogen atom is completely annihilated, and this due to the asymmetric initial conditions. In Fig. \ref{IMA_ASY} (A) one observes that the particle remains with a certain probability in each potential well, without crossing the barrier by tunneling and jumping in the other well. In each well, the particle is performing a periodic motion, permitting to show that the particle is not at rest in the well. In Fig. \ref{IMA_ASY} (B) the parameters are more extreme and the particle seems even to be at rest in the two wells.
\begin{figure}[htbp]
\begin{center}
\includegraphics[width=7cm]{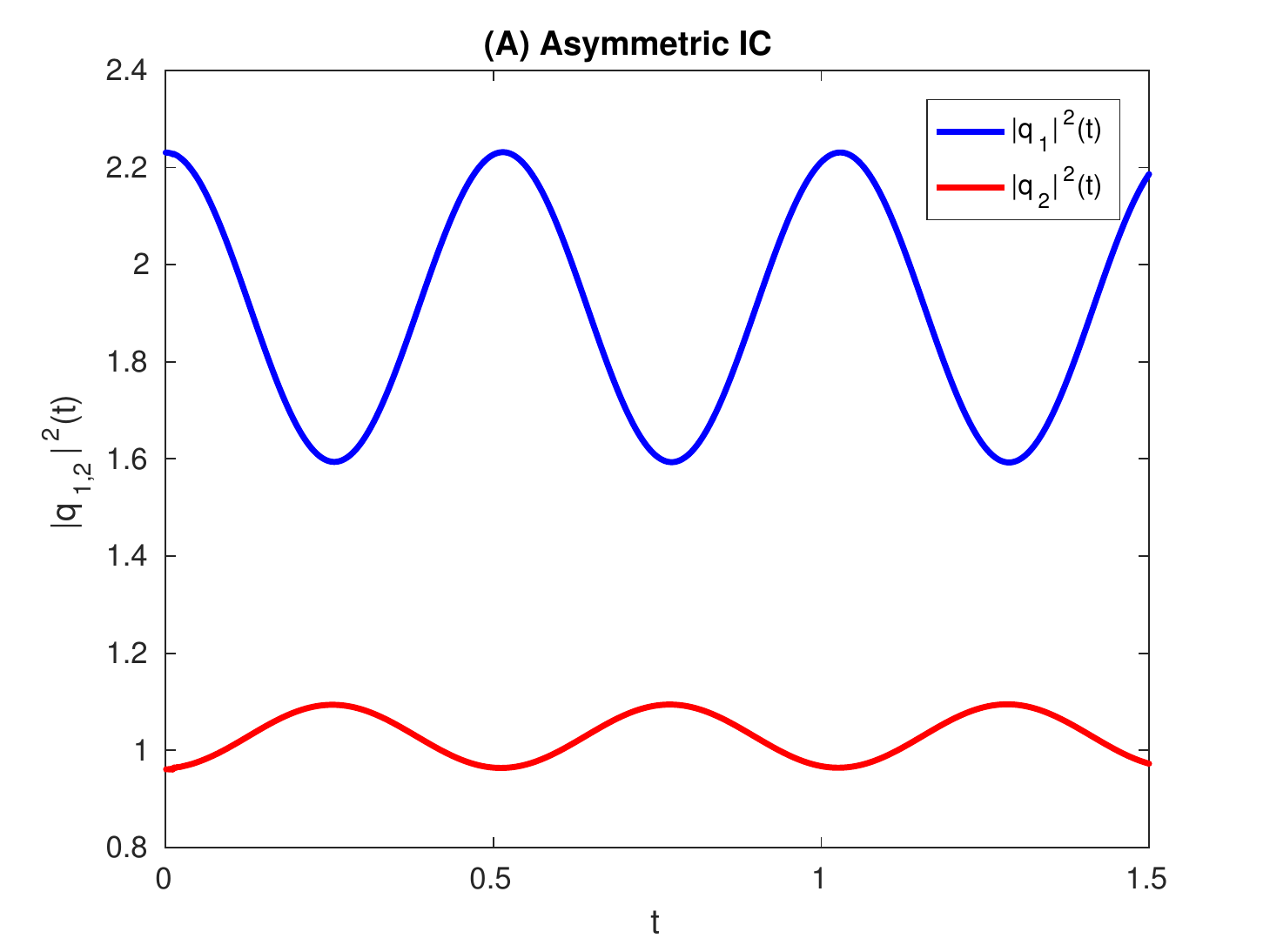}\hfill
\includegraphics[width=7cm]{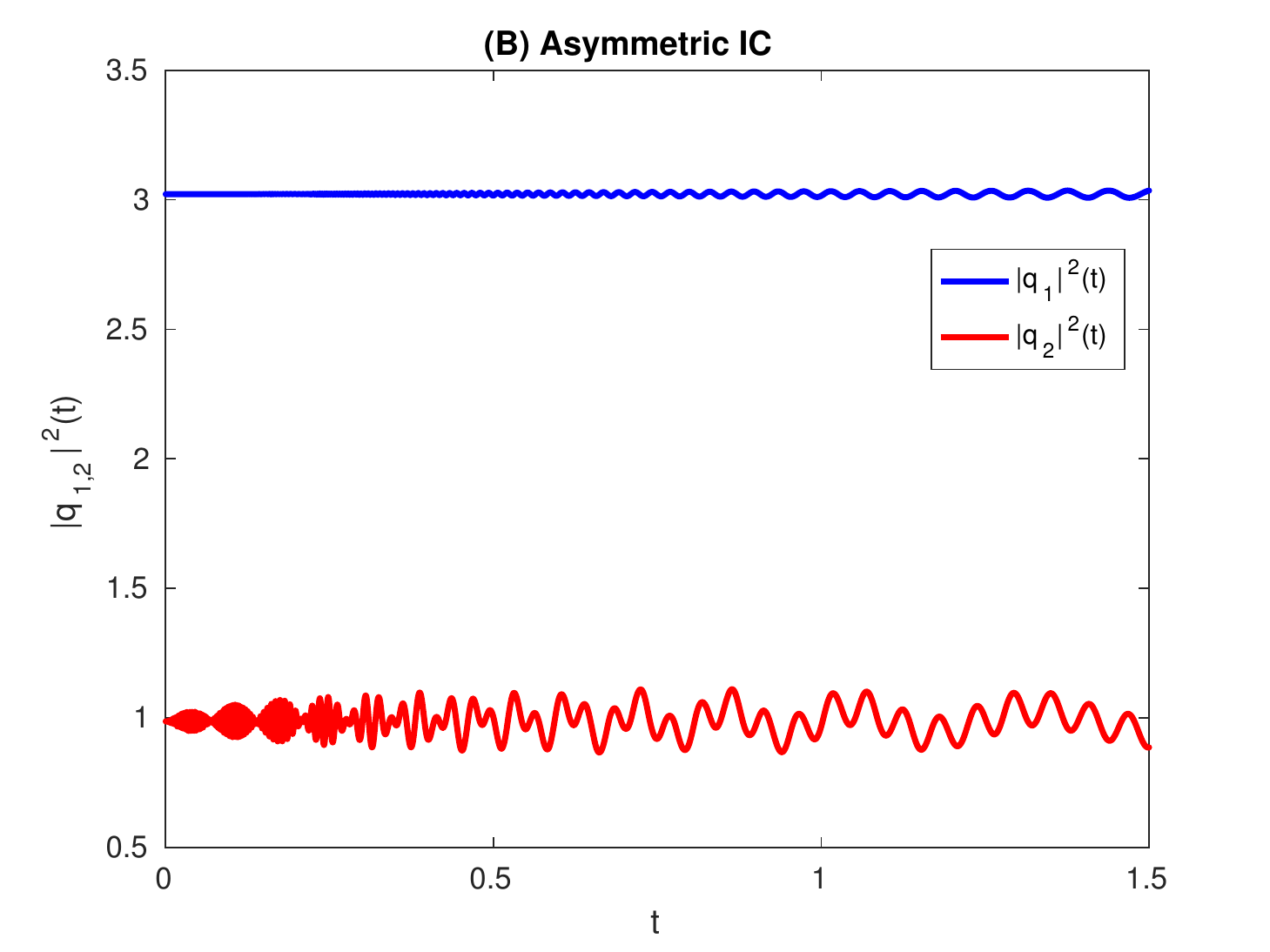}
\end{center}
\caption{\label{IMA_ASY} 
{\footnotesize The asym. lin. case. Left: Evolution in time of the numerical solutions $|q_1|^2(t)$ resp. $|q_2|^2(t)$ with the set of parameters (A). Right: Same plots with the set of parameters (B).}}
\end{figure}
The small oscillations one can observe in Fig. \ref{IMA_ASY} (B) are due to numerical errors, the exact solution is quasi constant in time, oscillating with an amplitude of approx. $10^{-7}$, as shown in the zoom of Fig. \ref{IMA_OSC} for the unknown $q_2$. The relative error in this case between the exact solution and the numerical one is of $12 \%$.
\begin{figure}[htbp]
\begin{center}
\includegraphics[width=7cm]{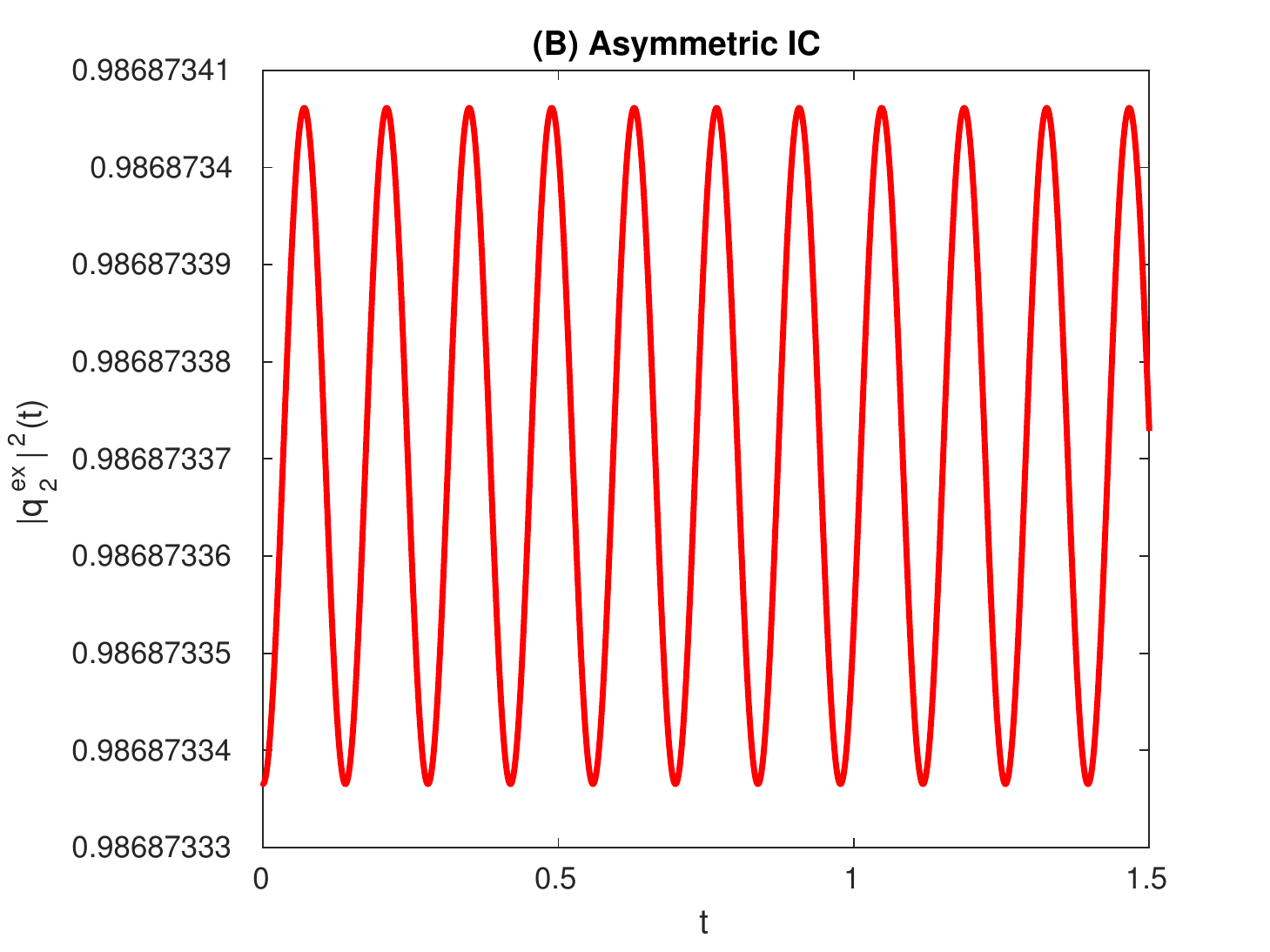}
\end{center}
\caption{\label{IMA_OSC} 
{\footnotesize The asym. lin. case. Evolution in time of the numerical solutions $|q_2|^2(t)$ with the set of parameters (B).}}
\end{figure}
%%%%%%%%%%%%%%%%%%%%%%
\subsection{The non-linear case} \label{NON}
%%%%%%%%%%%%%%%%%%%%%%
Let us now come to the study of the non-linear case and a detailed investigation of the destruction of the beating phenomenon. We start by choosing the same initial condition and the same parameters as in the symmetric linear case \eqref{ICB}, \eqref{PARA} and go on by raising step by step the parameter $\sigma>0$. The following Figures correspond to the following nonlinearity exponents
$$
\sigma=0.3\,; \,\, \sigma=0.6\,; \,\,\sigma=0.7\,; \,\,\sigma=0.8\,; \,\,\sigma=0.9\,; \,\,\sigma=0.98\,.
$$
What has to be mentioned here, is the choice of the parameter $\gamma$. We recall that
$$
\gamma_\pm(t)=\gamma\, |\psi(t,\pm a)|^{2 \sigma}\,.
$$
Using this formula at the initial instant $t=0$ with $\gamma_\pm(0)$ given as in the symmetric linear case, namely $\gamma_\pm(0)=-0.5$, permits after insertion of $\psi_{0}(\pm a)$ to choose $\gamma<0$ as follows
$$
\gamma:=2\,\gamma_\pm(0)/[|\psi_{0}(a)|^{2 \sigma}+|\psi_{0}(- a)|^{2 \sigma}]\,.
$$
In the following Figures we plotted the numerical solutions of the Volterra-system \eqref{VOLT2}-\eqref{SCH1D}, {\it i.e.} $|q_{1}^{num}|^2(t)$ resp. $|q_{2}^{num}|^2(t)$ (in blue resp. red) as functions of time, and for the different non-linearity exponents given above. At the same time, we plotted in the same Figures, as a reference, the exact solutions of the symmetric linear system, {\it i.e.}  $|q_{beat,1}|^2(t)$ resp. $|q_{beat,2}|^2(t)$ (in cyan resp. magenta). One observes step by step, how the non-linearity destroys the beating-effect.
\begin{figure}[htbp]
\begin{center}
\includegraphics[width=7cm]{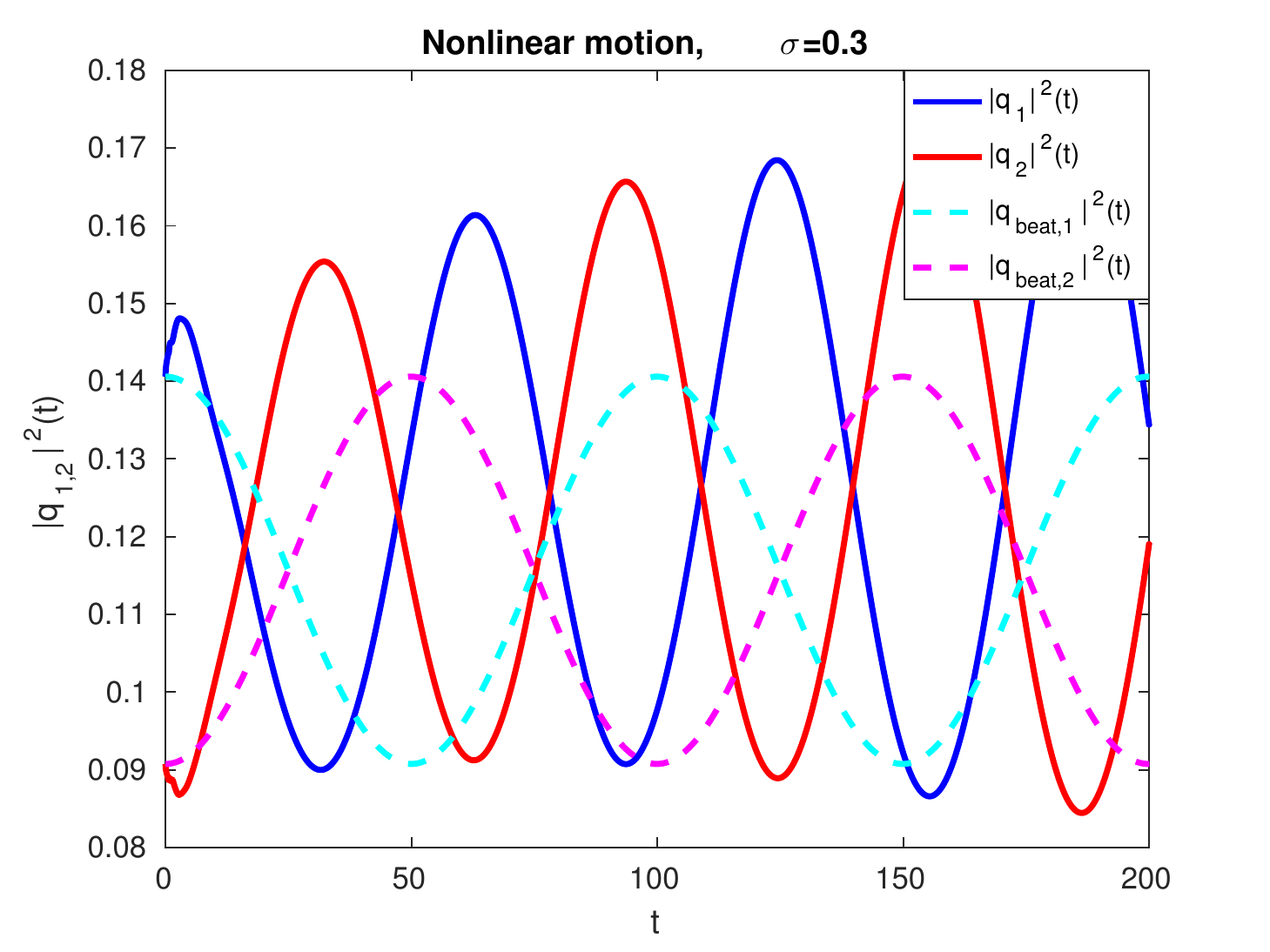}\hfill
\includegraphics[width=7cm]{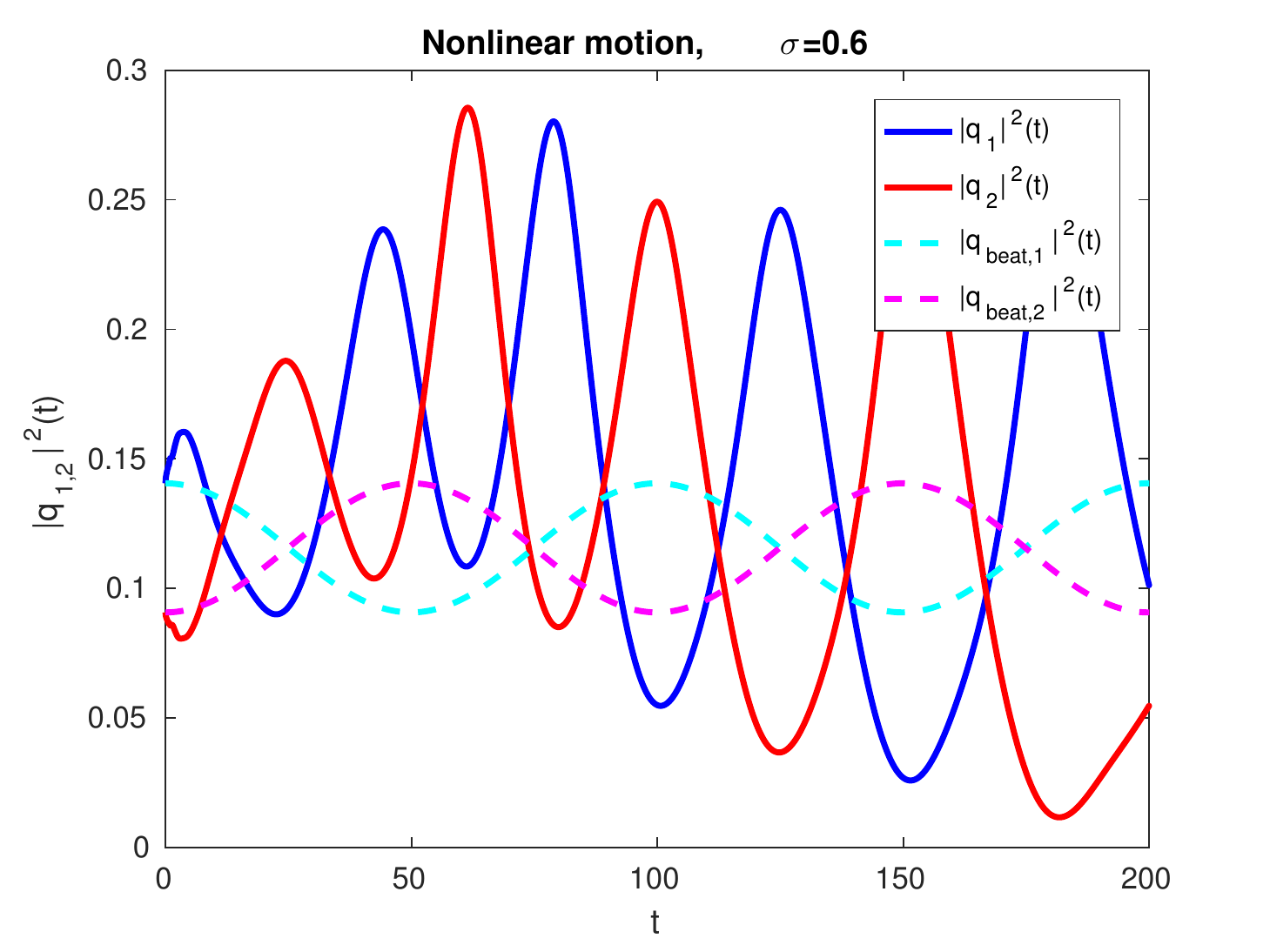}
\end{center}
\caption{\label{IMA1} 
{\footnotesize The non-linear time-evolution of the numerical solutions $|q_{1}^{num}|^2(t)$ resp. $|q_{2}^{num}|^2(t)$ (in blue/red full line) and corresponding linear beating solutions $|q_{beat,1}|^2(t)$ resp. $|q_{beat,2}|^2(t)$ (in cyan/magenta dashed line), for $\sigma=0.3$ (left) and $\sigma=0.6$ (right).}}
\end{figure}
\begin{figure}[htbp]
\begin{center}
\includegraphics[width=7cm]{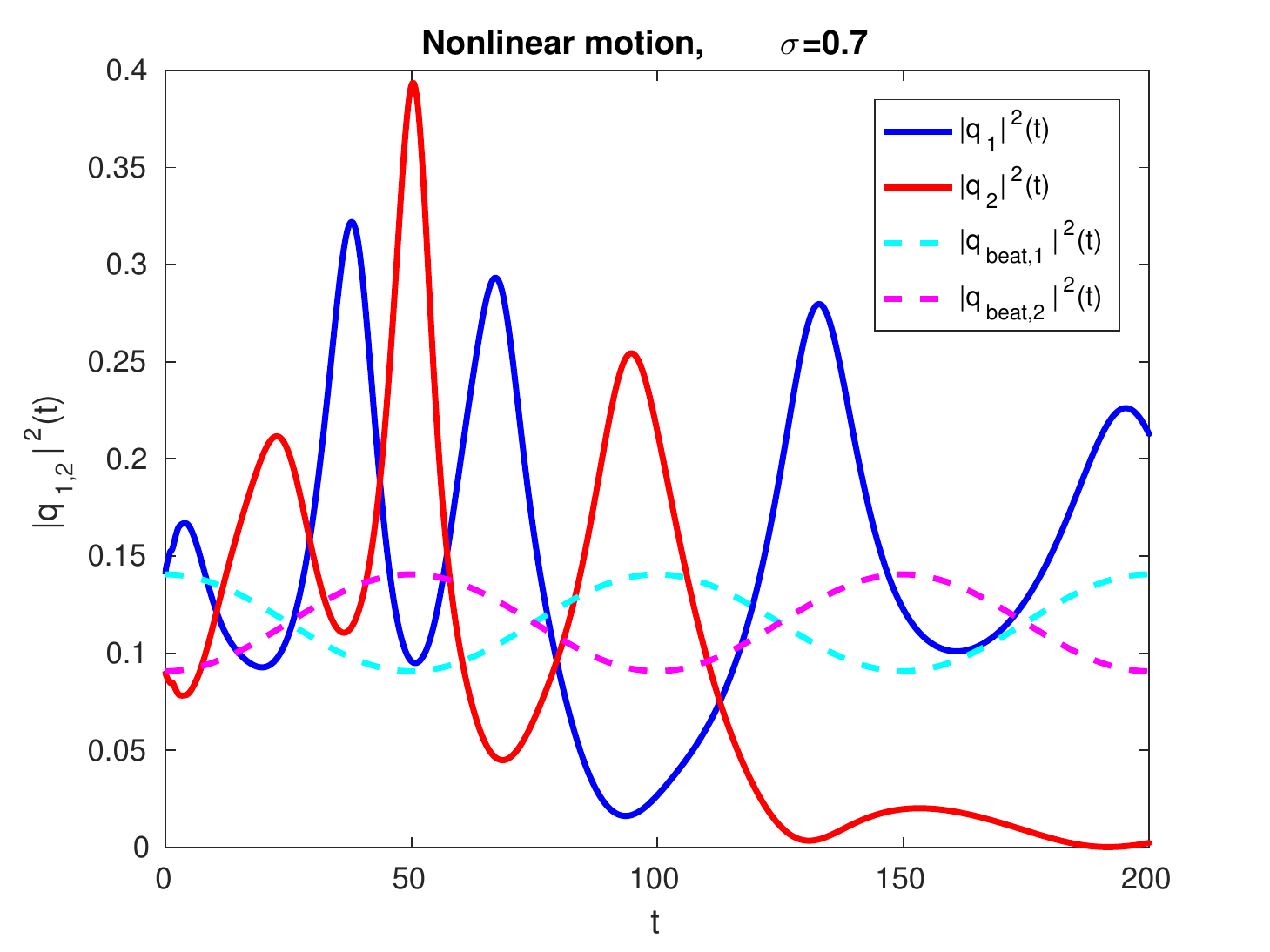}\hfill
\includegraphics[width=7cm]{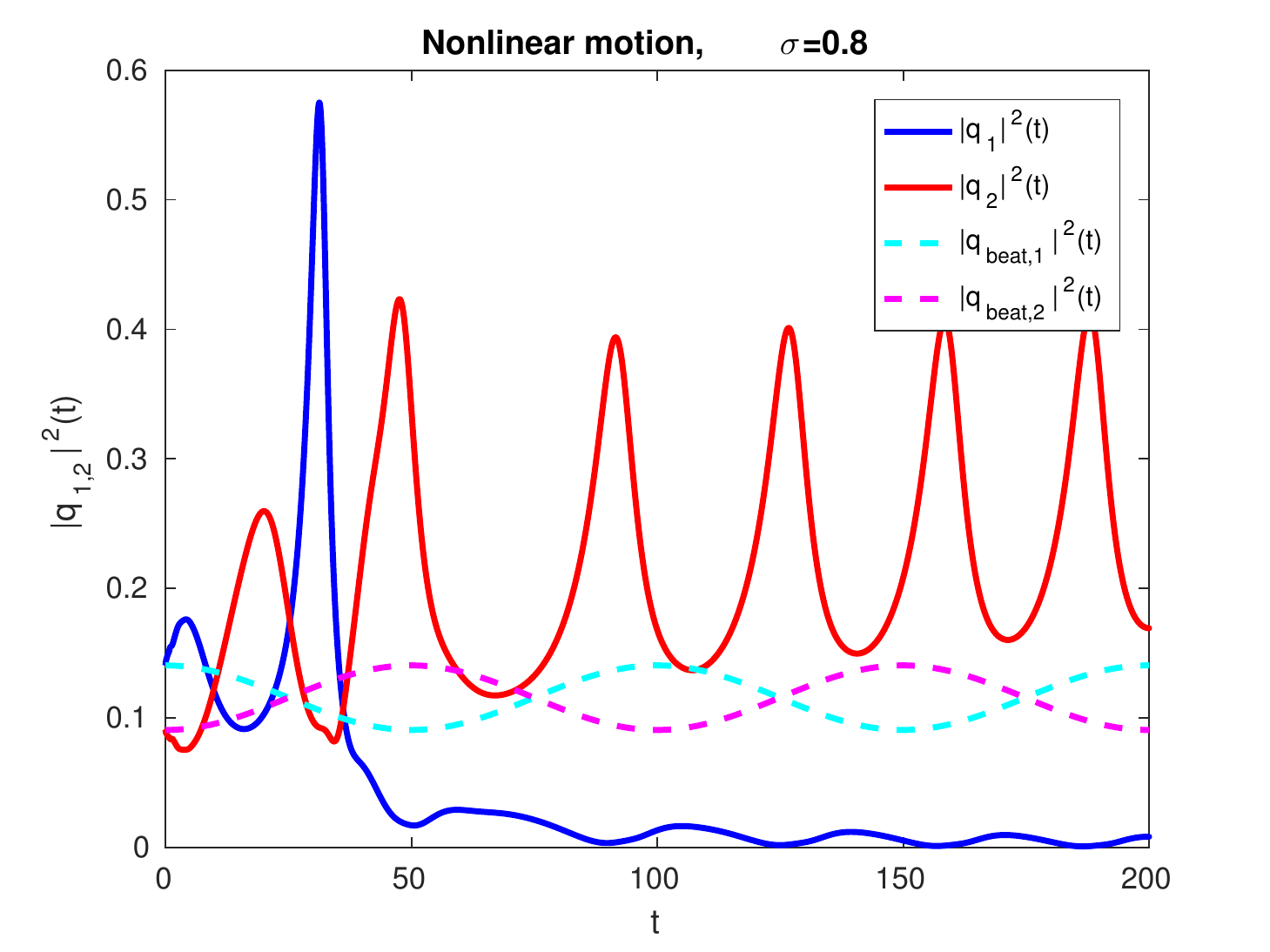}
\end{center}
\caption{\label{IMA2} 
{\footnotesize The non-linear time-evolution of the numerical solutions $|q_{1}^{num}|^2(t)$ resp. $|q_{2}^{num}|^2(t)$ (in blue/red full line) and corresponding linear beating solutions $|q_{beat,1}|^2(t)$ resp. $|q_{beat,2}|^2(t)$ (in cyan/magenta dashed line), for $\sigma=0.7$ (left) and $\sigma=0.8$ (right).}}
\end{figure}
\begin{figure}[htbp]
\begin{center}
\includegraphics[width=7cm]{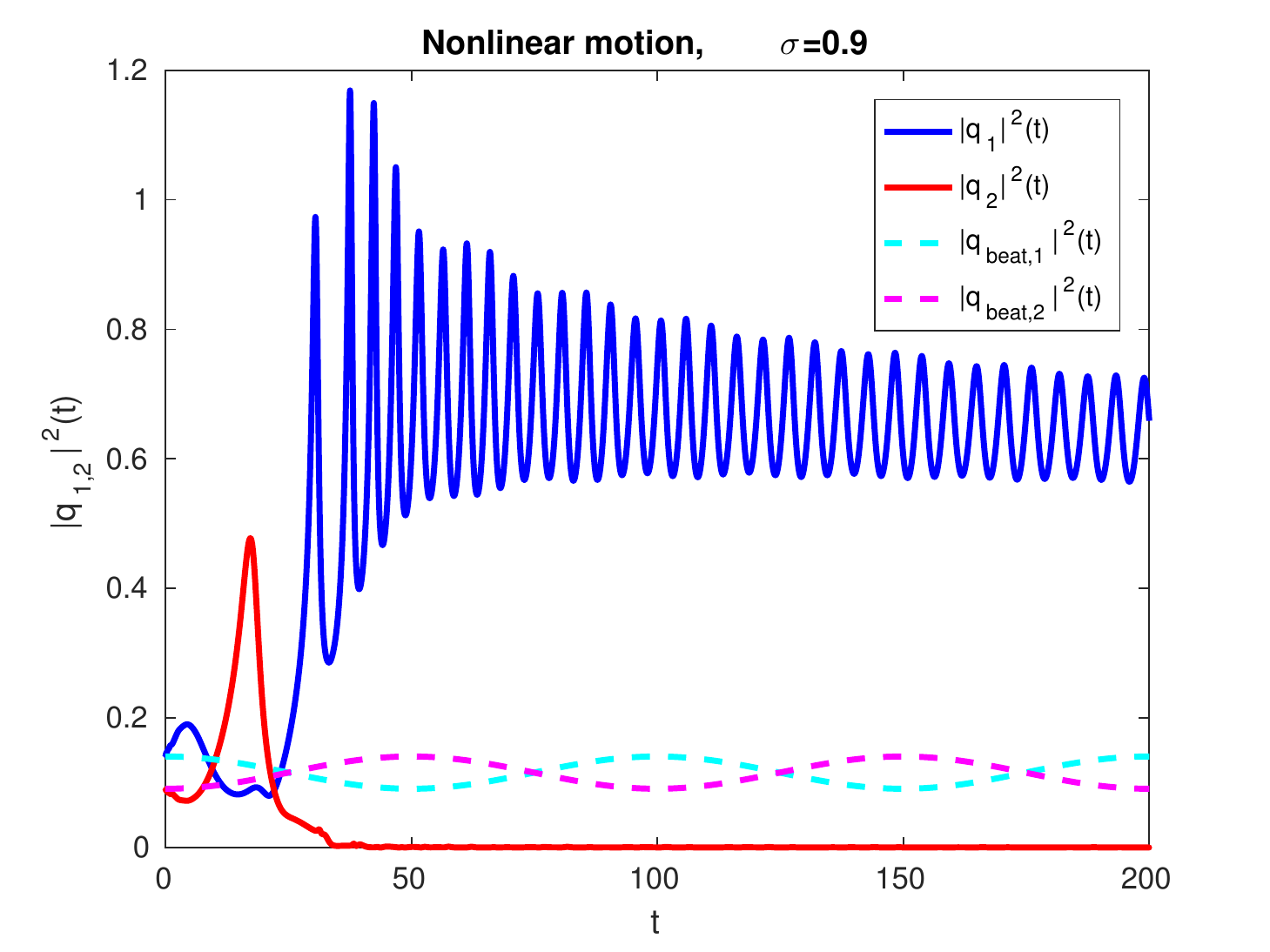}\hfill
\includegraphics[width=7cm]{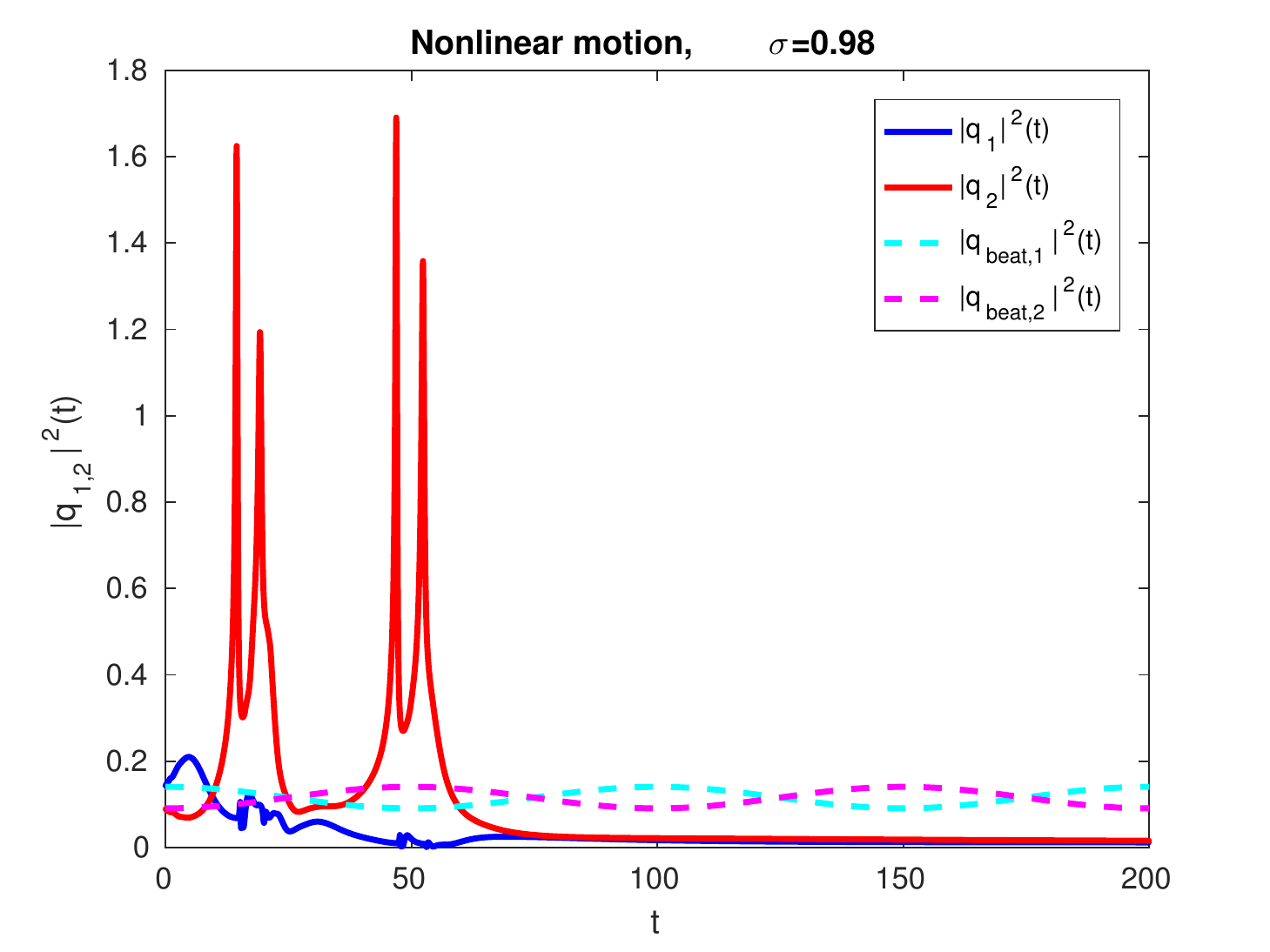}
\end{center}
\caption{\label{IMA3} 
{\footnotesize The non-linear time-evolution of the numerical solutions $|q_{1}^{num}|^2(t)$ resp. $|q_{2}^{num}|^2(t)$ (in blue/red full line) and corresponding linear beating solutions $|q_{beat,1}|^2(t)$ resp. $|q_{beat,2}|^2(t)$ (in cyan/magenta dashed line), for $\sigma=0.9$ (left) and $\sigma=0.98$ (right). Bolw-up on the right.}}
\end{figure}
\newpage
%%%%%%%%%%%%%%%%%%%%%%%%%%%%%
\section{Conclusion}\label{SECC}
%%%%%%%%%%%%%%%%%%%%%%%%%%%%%
As it was noticed by many authors in the past, the quantum beating mechanism is highly unstable under perturbations breaking the inversion symmetry of the problem. In this paper we reached similar results analyzing the suppression of the quantum beating in a zero range non-linear double well potential.

\noindent
It is worth mentioning the few features that make our model differ from the ones considered in the past.
\begin{itemize}
\item
The non-linear point interaction hamiltonians we consider here are explicitly symmetric. The asymmetry bringing to the suppression of the beating phenomenon is due to the strong dependence on  the initial conditions in the non-linear evolution.
\item
In our model no confining potential is present. As a consequence, there could be loss of mass at infinity for large times (``ionization''; see \cite{Costin:2001hp} and \cite{Correggi:2005ev}). Our results show that, in the short run, the strongly non-linear attractive double well potential cause confinement of the quantum particle and suppression of the quantum beating. The long term behavior of the solution in the non-linear case is a challenging problem that we plan to investigate in the next future.
\end{itemize}
\vspace{.2cm}
Let us conclude with few remarks on possible extensions of the present work. In this paper we chose to perform the numerical analysis of the evolution of a beating state, in presence of a non-linear perturbation of a double well potential, when the evolution equation is rephrased as  a system of two coupled weakly singular Volterra integral equation. Our aim was to test the effectiveness of this reduction in order to simplify the numerical analysis of the evolution equations. Main reason of this choice is that the very same reduction is possible in dimension two and three in spite of the fact the much more singular boundary conditions have to be satisfied at any time in those cases. As a consequence, the generalization to higher dimensions is expected to be a feasible task that we want to complete in further work.

\bigskip

\noindent {\bf Acknowledgments.} R.C.  acknowledge the support of the FIR 2013 project ``Condensed Matter in Mathematical Physics'', Ministry of University and
Research of Italian Republic  (code RBFR13WAET). C.N. would like to acknowledge
support from the CNRS-PICS project ``MANUS'' (Modelling and Numerics of Spintronics and Graphenes, 2016-2018). 
%%%%%%%%%%%%%%%%%%%%%%%%%

\bibliographystyle{unsrt}
\bibliography{beating.bib}

%\begin{thebibliography}{9}
%\footnotesize{

%\bibitem{DDLNN} P. Degond, F. Deluzet, A. Lozinski, J. Narski,
%  C. Negulescu, {\it Duality based Asymptotic-Preserving Method for
%    highly anisotropic diffusion equations}, Communications in
%  Mathematical Sciences {\bf 10} (2012), no. 1, 1--31.   

%}
%\end{thebibliography}

%%%%%%%%%%%%%%

\end{document}